\def\BibTeX{{\rm B\kern-.05em{\sc i\kern-.025em b}\kern-.08em
    T\kern-.1667em\lower.7ex\hbox{E}\kern-.125emX}}
\definecolor{dkgreen}{rgb}{0,0.6,0}
\definecolor{gray}{rgb}{0.5,0.5,0.5}
\definecolor{mauve}{rgb}{0.58,0,0.82}
\tiny\color{gray},
\newtheorem{problem}{Problem}
\newtheorem{definition}{Definition}
\newtheorem{lemma}{Lemma}
\newtheorem{lemma*}{Lemma}
\newtheorem{corollary*}{Corollary}
\newcommand{\stitle}[1]{\vspace*{0.4em}\noindent{\bf #1.\/}}
\newcommand{\sstitle}[1]{\vspace*{0.4em}\noindent{\bf #1:\/}}
\newcommand{\trim}{\vspace{-2mm}}
\newcommand{\thor}{$\mathsf{CheetahGIS}$}
\newcommand{\thorm}{$\mathsf{CheetahGIS}^{-}$}
\newcommand{\statefun}{$\mathsf{StateFun}$}
\newcommand{\flink}{$\mathsf{Flink}$}
\newcommand{\bb}{{\sf QueryBroadcast}}
\newcommand{\oldversion}[1]{}
\newcommand{\squishlist}{
 \begin{list}{$\bullet$}
 { \setlength{\itemsep}{0pt}
   \setlength{\parsep}{3pt}
   \setlength{\topsep}{3pt}
   \setlength{\partopsep}{0pt}
   \setlength{\leftmargin}{1.2em}
   \setlength{\labelwidth}{1em}
   \setlength{\labelsep}{0.6em}
 }
}
\newcommand{\squishend}{
 \end{list}
}
\begin{document}

\title{CheetahGIS: Architecting a Scalable and Efficient Streaming Spatial Query Processing System}
\author{
    \IEEEauthorblockN{Jiaping Cao\IEEEauthorrefmark{1}, Ting Sun\IEEEauthorrefmark{2}, Man Lung Yiu\IEEEauthorrefmark{1}, Xiao Yan\IEEEauthorrefmark{3}, Bo Tang\IEEEauthorrefmark{2}}
    \IEEEauthorblockA{\IEEEauthorrefmark{1}Department of Computing, Hong Kong Polytechnic University
    \\\{csjcao1, csmlyiu\}@comp.polyu.edu.hk}
    \IEEEauthorblockA{\IEEEauthorrefmark{2}Department of Computer Science, Southern University of Science and Technology
    \\ suntcrick@gmail.com, tangb3@sustech.edu.cn}
    \IEEEauthorblockA{\IEEEauthorrefmark{3}Institute for Math \& AI, Wuhan University
    \\ yanxiaosunny@whu.edu.cn}
}

\maketitle
\thispagestyle{plain}
\pagestyle{plain}

\begin{abstract}
Spatial data analytics systems are widely studied in both the academia and industry.
However, existing systems are limited when handling a large number of moving objects and real-time spatial queries.
In this work, we architect a scalable and efficient system \thor{} to process streaming spatial queries over massive moving objects.
In particular, \thor{} is built upon Apache Flink Stateful Functions (\statefun{}), an API for building distributed streaming applications with an actor-like model.
\thor{} enjoys excellent scalability due to its modular architecture, which clearly decomposes different components and allows scaling individual components.
To improve the efficiency and scalability of \thor{}, we devise a suite of optimizations,
e.g., lightweight global grid-based index, metadata synchronization strategies, and load balance mechanisms.
We also formulate a generic paradigm for spatial query processing in \thor{}, and verify its generality by processing three representative streaming queries (i.e., object query, range count query, and $k$ nearest neighbor query).
We conduct extensive experiments on both real and synthetic datasets to evaluate \thor{}.
\end{abstract}

\section{Introduction}\label{sec:intro}

With the popularization of GPS devices, large-scale spatial data (e.g., for car, bus, and pedestrian) are collected and processed in the mobile Internet era.
To process spatial data efficiently, many algorithms~\cite{tao2013approximate,ouyang2020progressive} have been designed.
Meanwhile, many spatial data management systems (e.g., SpatialHadoop~\cite{eldawy2015spatialhadoop}, GeoSpark~\cite{yu2015geospark}, UlTraMan~\cite{ding2018ultraman}) have also been developed for the storage, management, and analysis of spatial data.
Recently, the demand for real-time query processing over a large number of moving objects is on the rise. 
We give two example applications as follows.

\stitle{Example I: Request a ride}
When a passenger requests a ride on the ride service apps (e.g., Uber or Didi), the service provider will issue an online $k$ nearest neighbor ($k$NN) query to find the top-$k$ nearest available taxis to the passenger. 
As the taxis are moving all the time, query results with large delay may cause location error and unpleasant user experiences.
To make matters worse, the scale of ride requests  is extremely large for ride services. 
For instance, Uber receives five million ride sharing requests per day worldwide~\cite{uberride}.
The service provider may also run other applications on streaming trip data, e.g., predicting trip patterns (e.g., trip density) ~\cite{uber2017}.

\stitle{Example II: Social networks}
Many social networks (e.g., Twitter, Wechat)  provide location-based services,
e.g., broadcasting real-time regional news and recommending Points-of-Interests.
The number of users in social networks can reach millions or even billions.
Each user can be modeled as a moving object, which dynamically updates its location. 
The technical challenges of location-based services in social networks are (i) how to maintain the real-time location of every user and (ii) how to cope with many location-based queries that are issued (by either user or system) constantly in time~\cite{thinkspatial}.

From two applications above, we summarize the key requirements (and the corresponding challenges) of real-time query processing over large-scale streaming data.

\squishlist
\item \sstitle{Efficiency} Short query latency is  crucial for the online services.
It is challenging as (i) the real-time spatial queries are conducted on a large number of moving objects with frequent location updates, and (ii) the computation of streaming spatial queries can be complex, e.g., counting the number of cars in a given region.

\item \sstitle{Scalability} Good scalability is essential as large-scale spatial data is usually beyond the capacity of a single machine.
Moreover, the query throughput should also scale with cluster resources.
It is challenging to build a scalable streaming spatial query processing system that does not suffer from severe performance degradation when processing tremendous data updates or spatial queries.

\item \sstitle{Generality} The generality of a streaming spatial query processing system is two-fold: (i) it should support the processing of a wide range of streaming spatial queries at runtime, and (ii) it should be extendable to accommodate user-defined streaming spatial queries. It is essential to have a generic query processing paradigm for streaming spatial query processing systems, as different queries have significantly different processing logic.

\squishend

In the literature, many spatial systems~\cite{alarabi2018summit, bakli2019hadooptrajectory, ding2018ultraman} are built on distributed data analytics frameworks (e.g, Hadoop and Spark) and designed to process large-scale static data, instead of real-time streaming data. 
The spatial extensions of RDBMSs~\cite{mongodb, spatialite} also process  streaming spatial queries but the query efficiency is low as the I/O bottleneck of these extensions is obvious~\cite{alam2021survey}.
These systems typically have higher latencies due to data being processed in batches or through complex query executions.
To better suit real-time scenarios, several spatial data processing engines built on streaming systems have been developed~\cite{shaikh2020geoflink, chen2021star, chen2020sstd}. 
These engines operate with low latency, allowing for the rapid ingestion and processing of data.
However, none of them enjoy excellent efficiency, good scalability, and nice generality simultaneously, as we will elaborate shortly.

To address all the challenges above, we propose \thor, a streaming system for real-time spatial data analytics.
The code of \thor{} is available at \cite{cheetahgiscode}.
Specifically, \thor{} is built upon Apache Flink Stateful Functions (i.e., \statefun{})~\cite{statefun}, which is an API for building distributed streaming applications with an actor-like model.
The actor-like model enables us to easily implement complex logic involving communications among multiple operators.
The architecture of \thor{} consists of several modules, i.e., \textsf{Transformer}, \textsf{Indexer}, \textsf{Local Processor}, \textsf{Aggregator}, \textsf{Load Balancer}, and \textsf{Metadata Synchronizer}.
The \textsf{Transformer} module processes data updates and answers simple queries (i.e., object query).
The \textsf{Indexer} module keeps a grid-based index, which serves two purposes: (i) pruning unqualified candidates for efficient query processing,
and (ii) assigning local processors to process queries for the unpruned candidates exactly.
Each \textsf{Local Processor} instance executes the queries with the partial data stored on it via a task slot.
The results of all \textsf{Local Processor} instances are aggregated by the \textsf{Aggregator} module to return the final query result.
Both \textsf{Load Balancer} and \textsf{Metadata Synchronizer} modules are devised to improve the overall performance of \thor.

Given the system components, \thor{} enjoys good scalability due to its flexible modular system architecture.
It supports many streaming spatial queries (e.g., object query, range count query, and $k$ nearest neighbor query), and can be easily extended to support user-defined queries as it provides a general query processing paradigm by utilizing its system modules.
Moreover, \thor{} achieves efficient query processing for two reasons:
(i) a suite of techniques and optimizations (e.g., grid-based global index, \textit{many-to-one} execution mode, and fine-grained resource management) are devised;
and
(ii) the \textsf{Load Balancer} and \textsf{Metadata Synchronizer} modules speed up  query processing by balancing the workloads in the distributed system and pruning the unqualified candidates efficiently.

The major technical contributions of this work are summarized as follows.

\begin{enumerate}
    \item We architect \thor, which provides a holistic solution to streaming spatial query processing with high throughput and low latency (see Section~\ref{sec:sys}).
    \item We propose a suite of techniques and optimizations to achieve good query processing performance in \thor{} (see Section~\ref{sec:dsg}).
    \item We devise a unified streaming spatial query processing paradigm in \thor{}, and elaborate how it can be applied to process three representative queries (i.e., object query, range count query, and $k$NN query) and user-defined queries (see Section~\ref{sec:qry}).
    \item We verify the superiority of \thor{} for processing various streaming spatial queries and the effectiveness of our proposed optimizations and techniques by extensive experimental studies on various datasets (see Section~\ref{sec:exp}).
\end{enumerate}

\begin{table*}
    \centering
    \label{tab:summary}
    \caption{A Summary of Spatial Analysis Systems }
    \begin{tabular}{|c|c|c|c|c|c|}
    \hline
       System & Framework & Real-time Support & Data Type & Load Balance & Category  \\ \hline
       SpatialLite~\cite{spatialite} & SQLite & No & Point, LineString, Polygon & - & Spatial database (Sec.~\ref{sec:database}) \\ \hline
       MongoDB~\cite{mongodb} & MongoDB & No & Point, LineString, Polygon & - & Spatial database (Sec.~\ref{sec:database}) \\ \hline
       SpatialHadoop~\cite{eldawy2015spatialhadoop} & Hadoop & No & Point, LineString, Polygon & - & Big data system (Sec.~\ref{sec:bigsys}) \\ \hline
       GeoSpark~\cite{yu2015geospark} & Spark & No & Point, LineString, Polygon & - & Big data system (Sec.~\ref{sec:bigsys}) \\ \hline
       Dragoon~\cite{fang2021dragoon} & Spark & Yes & Trajectory & Yes  & Big data system (Sec.~\ref{sec:bigsys}) \\ \hline
       SSTD~\cite{chen2020sstd} & Storm & Yes & Spatio-textual data & Yes & Streaming system (Sec.~\ref{sec:streamingsys})  \\ \hline
       GeoFlink~\cite{shaikh2020geoflink} & Flink & Yes & Time-window based events & Yes  & Streaming system (Sec.~\ref{sec:streamingsys}) \\ \hline
       {\bf Our \thor{}} & {\bf Flink Statefun} & {\bf Yes} & {\bf Moving objects} & {\bf Yes} & Streaming system with actor model \\ \hline
    \end{tabular}
\end{table*}

\section{Related work}\label{sec:bkg}
In this section, we introduce existing spatial data analysis systems, which are summarized in 
Table~\ref{tab:summary}.
Our \thor{} differs from them in terms of (i) the underlying framework and (ii) supported data type.
In addition, \thor{} naturally supports real-time queries and considers load-balance in its architecture.

\subsection{Spatial Data Analysis \& Spatial Databases}\label{sec:database}

With the advancement of location positioning tracking technologies~\cite{zheng2015trajectory}, many algorithms ~\cite{tao2013approximate, ayanso2014range, xu2018enabling, liu2021accurate,ouyang2020progressive, gu2016moving, li2016fast,tang2022discovering,tang2017efficient} have been devised to solve various spatial data analysis problems (e.g., range query, motif discovery, maximum range sum) 
in the literature. 
Almost all of them are designed to solve a specific query efficiently while we propose a streaming spatial query processing system to process various queries in this work.
From the system perspective, spatial databases have been developed decades ago. 
Some of them are extensions of RDBMS, e.g., PostGIS~\cite{postgis} and SpatialLite~\cite{spatialite}. 
There are also several NoSQL databases that provide spatial supports. 
Some of them have native spatial support (e.g., Redis~\cite{redis} and MongoDB~\cite{mongodb}) as they provide built-in optimizations for spatial data.
The rest of them use NoSQL databases as their underlying systems, e.g., GeoMesa~\cite{hughes2015geomesa} on Accumulo, JUST~\cite{li2020just} on HBase. 
Although these systems are more favorable than RDBMS when handling large-scale spatial data, they are not directly applicable to real-time applications.
In addition, they are not designed to avoid disk I/O, and thus suffer from performance degradation due to a large amount of data reads and writes.

\subsection{Big Data Spatial Analysis Systems}\label{sec:bigsys}
To accommodate the need for highly scalable distributed systems, several big spatial data processing systems are developed in recent years~\cite{alam2021survey}.  
Many of them (e.g., Summit~\cite{alarabi2018summit}, HadoopTrajectory~\cite{bakli2019hadooptrajectory}, TrajSpark~\cite{zhang2017trajspark}, UlTraMan~\cite{ding2018ultraman}, and Dragoon~\cite{fang2021dragoon}) are built on big data frameworks (e.g., Hadoop and Spark).
These systems have high scalability in terms of cluster resources.
Hadoop-based spatial data analysis systems have disk I/O bottlenecks while Spark-based systems achieve better performance because Spark is optimized to use large amounts of memory.
However, both types of systems are not fully applicable to real-time scenarios.
Spark-based systems still have high latency as they use micro-batch for data transfer~\cite{karimov2018benchmarking}.
Moreover, frequent modifications to the locations of the moving objects pose challenges to the immutable RDDs of Spark, albeit that UltraMan~\cite{ding2018ultraman} and Dragoon~\cite{fang2021dragoon} proposed techniques to alleviate the problem.

\subsection{Streaming Spatial Analysis Systems}\label{sec:streamingsys}

Streaming query processing has been widely used in many business applications, e.g., fraud detection, and online recommendation~\cite{carbone2020beyond,dayarathna2018recent}.
On one hand, many commercial streaming query processing systems have been developed in leading IT organizations (e.g., Twitter~\cite{kulkarni2015twitter,floratou2017dhalion}, Facebook~\cite{mei2020turbine}, and LinkedIn~\cite{noghabi2017samza}).
On the other hand, many excellent open-source streaming systems have been built by the community. The most representative ones include Apache Storm~\cite{storm}, Apache Spark Streaming~\cite{sparkstreaming}, and Apache Flink~\cite{flink}. Apache Flink has a better durable storage mechanism than Apache Storm.
In this work, we focus on streaming spatial query processing over a large amount of moving objects with a high update rate.
In particular, the spatial locations of the moving objects are generated and processed in a streaming manner to answer various spatial queries (e.g., $k$NN query, range query).

Several systems based on stream processing engines, such as Apache Storm~\cite{storm} and Apache Flink~\cite{flink}, have been developed for spatial data analysis.
Specifically, Tornado~\cite{mahmood2015tornado}, PS2Stream~\cite{chen2017ps2stream}, and SSTD~\cite{chen2020sstd} are Storm-based systems for Spatial-textual data streams. 
In these works, they attach additional textual keyword information to the spatial data and thus can support complex query conditions.
In addition, STAR~\cite{chen2021star} is specifically optimized for handling snapshot and continuous aggregate queries.
These systems also provide mechanisms to handle load balancing issues.
The above systems differ from our \thor{} as they focus on data models that are spatial-textual data and time-window based events, instead of moving objects in \thor{}, as we will elaborate it in Section~\ref{sec:preliminary}.
As a result, their system architectures and optimization techniques are not tailored for streaming spatial queries over a large amount of frequently updated moving objects. 

GeoFlink~\cite{shaikh2020geoflink} is built upon Flink, which processes window-based static spatial data stream.
It consists of the Filter phase and Refine phase.
It maps the input data to continuous query reducers for the registered queries (i.e., continuous spatial range query, $k$NN query and join query).
The generality of GeoFlink is low as it constructs the processing pipeline from scratch for different queries.
In addition, GeoFlink does not consider the load balancing issue, which results in poor scalability.
Several specialized streaming spatial systems have been studied in the database community.
For example, CarStream~\cite{zhang2017carstream} has been devised to develop safety-critical stream-processing applications using multiple types of driving data.
\cite{patroumpas2018fly} proposed an application framework to support on-the-fly mobility event detection over aircraft trajectories.
They differ from our work as they do not support various streaming spatial queries.

\section{Preliminary}\label{sec:preliminary}
In this section, we introduce the preliminaries of \thor{}, i.e., \statefun{} and the supported data model. 

\stitle{Stateful Functions in Flink}
\statefun{}~\cite{statefun} is an API to build stateful streaming applications~\cite{carbone2017stateful} with an actor-like model.
In particular, \statefun{} requires users to implement functions, which are the building bricks of the applications.
These functions can have parallel instances, message each other, and hold their own instance-level persisted states.
In addition, these functions may communicate with external systems to receive or send messages.
In this work, we employ \statefun{} as the underlying framework.
Compared to the static dataflow direct acyclic graph (DAG), which is commonly used in stream processing systems (e.g., Storm\cite{storm} and Flink~\cite{flink}),
functions in \statefun{} are more flexible as they can message each other arbitrarily, which significantly reduces the complexity of system design.
Moreover, \statefun{} is built upon \flink{} and inherits the acclaimed \flink{} features (e.g., exactly-once guarantee) in the stateful streaming field.
Besides, \statefun{} is built for the serverless architecture~\cite{serverless}, which handles resource management for the applications, and allows the developers to focus on the logic of their applications.
In particular,  \statefun{} randomly assigns each function instance to a task slot for execution, where task slot is an abstraction of a CPU and a fixed-size RAM in the cluster.

\stitle{Data Model}
To capture real world streaming scenarios, we focus on \textit{moving objects}~\cite{fang2021dragoon}, e.g., taxis, buses, and passengers, in \thor.
Each moving object $o_i$ sends its latitude and longitude $(x_t, y_t)$ at timestamp $t$ (e.g., collected via GPS devices) to the system continuously, and each data tuple is denoted as $(i, t, x_t, y_t)$.

\section{\thor{} Overview}\label{sec:sys}
In this section, we present the architecture overview of \thor{} and briefly introduce its modules.

Figure~\ref{fig:system_architecture} depicts the architecture of \thor.
\thor{} collects the data generated by moving objects continuously to support streaming query processing.
The end-users issue streaming spatial queries (e.g., kNN query, range query) via front-end applications.
Simple queries can be answered by the \textsf{Transformer} module directly, and we will introduce the details shortly.
The other queries will be processed by the \textsf{Indexer} module, which assigns each query to a set of \textsf{Local Processor} instances.
Each \textsf{Local Processor} instance computes the partial result of the query.
The partial results from all \textsf{Local Processor} instances are aggregated at the \textsf{Aggregator} module to obtain the query final result, which is returned to end-users. 
These four modules can be scaled-out to all machines in the cluster.
\thor{} also embeds two important modules, i.e., \textsf{Metadata Synchronizer} and \textsf{Load Balancer}, which significantly  improve the performance of query processing by managing index and balancing load, respectively.

\thor{} adapts a Map-Reduce-like architecture, which is widely used in many spatial systems \cite{chen2021star, chen2016realtime}.
This architecture could be scaled out to process a large number of queries easily.
However, it is still challenging to fully unleash the computational ability of the underlying clusters.
The reasons are three-fold.

\begin{enumerate}
    \item Each query will be assigned to a large number of \textsf{Local Processor} instances on the \textsf{Indexer}.
    \item Many sub-queries sharing the same \textsf{Local Processor} instances are sent separately, which incurs extra overhead on network transmission.
    \item The data and query may be skewed, which leads to workloads imbalance among different instances (e.g., local processor).
\end{enumerate}

As we will show in Section~\ref{sec:dsg}, \thor{} overcomes these issues by devising novel techniques. 
In the following, we briefly elaborate the functionality of each module in \thor.

\begin{figure}
    \small
	\centering
	\includegraphics[width=1.0\columnwidth]{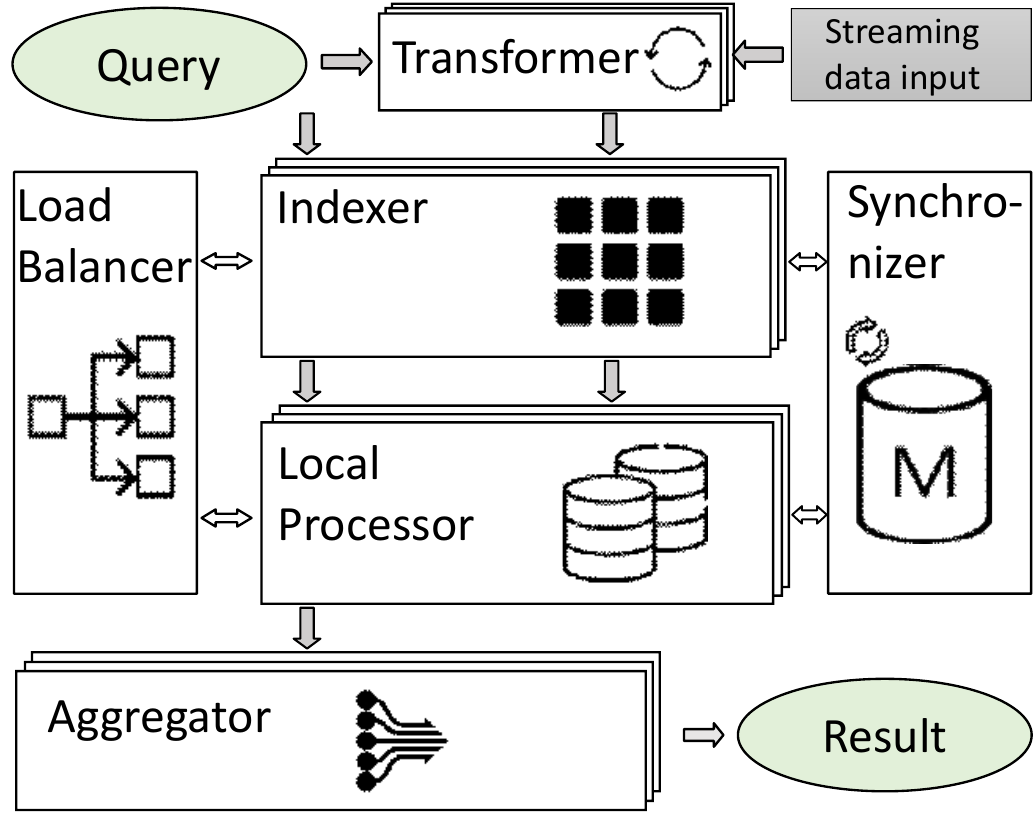}
	\caption{Architecture of \thor}
	\label{fig:system_architecture}
    \trim
\end{figure}

\stitle{Transformer module}
The location update record of an object does not contain its previous location, so \thor{} employs the \textsf{Transformer} module to transform the newest data of a moving object to a movement of the object.
We employ a hash table in \textsf{Transformer} to map the latest data tuple of each object.
For each collected data, the \textsf{Transformer} first finds the corresponding data tuple in the hash table.
Then, the hash table updates the latest location.
Finally, the \textsf{Transformer} shuffles the data tuples (i.e., object movement data) into the \textsf{Indexer} module for further processing.

\stitle{Indexer module}\label{sec:indexer}
The global index is maintained by the \textsf{Indexer} module in \thor.
The main responsibilities of the \textsf{Indexer} module are: (i) pruning cells to be searched for the queries, and (ii) assigning the data from \textsf{Transformer} module and queries to \textsf{Local Processors}.

\begin{figure}
    \small
	\centering
	\includegraphics[width=1.0\columnwidth]{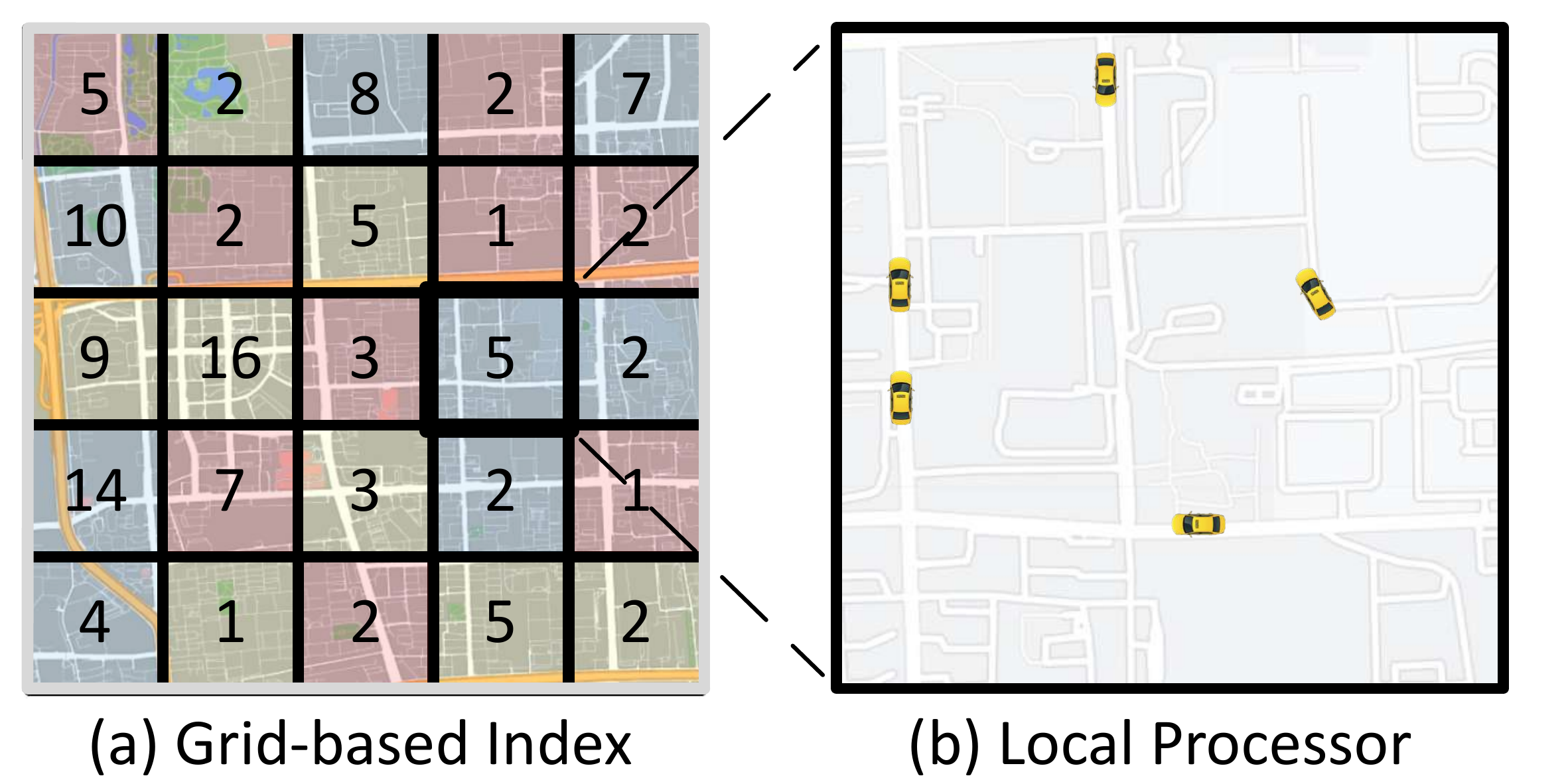}
	\caption{Grid-based index}
	\label{fig:grid_index}
    \trim
\end{figure}

\thor{} utilizes a grid-based index as global index in the \textsf{Indexer} module.
Compared to tree-based indexes, the grid-based index is simpler for data updates and query routing.
Figure~\ref{fig:grid_index}(a) illustrates the grid-based index.
It divides the entire region (e.g., a city, an administrative area) into equal-sized cells (a.k.a., grids).
Specifically, the grid-based index of the \textsf{Indexer} module only maintains the correspondence between each cell and its responsible \textsf{Local Processor} instance.
As shown in Figure~\ref{fig:grid_index}(a), cells with different colors correspond to different \textsf{Local Processor} instances.
The exact locations of the moving objects in each cell are stored at the corresponding \textsf{Local Processor} instance, as illustrated in Figure~\ref{fig:grid_index}(b).
For example, the locations of the five moving objects in the blue cell in Figure~\ref{fig:grid_index}(b) are stored in a \textsf{Local Processor} instance, along with the other moving objects from the other blue cells.

For data processing, the \textsf{Indexer} module treats the movement of an object as applying two basic operators (\textsf{delete} and \textsf{insert}) in the 
\textsf{Local Processor} instances.
The \textsf{Indexer} module identifies the corresponding \textsf{Local Processor} instance of each object by the location of its associated data  and routes the object to the instance accordingly.
For query processing, the \textsf{Indexer} module applies the metadata (i.e., the total number of moving objects) of each cell, as the values of each cell shown in Figure~\ref{fig:grid_index}(a),  to prune the unqualified cells at first, then routes the query to the \textsf{Local Processor} instances, which store the data of the unpruned cells, to compute the final result.

To sum up, the grid-based global index holds the following nice properties.
First, the cell size is predefined via system configuration.
It has two advantages: (i) it is tunable and (ii) it incurs light-weight memory consumption as it only has a constant number of cells.
Second, the time cost to find the \textsf{Local Processor} instance for each moving object is $O(1)$ as the cells have equal-size.
Third, it provides fast-and-loose pruning for query processing in \thor. We will explain how it works for query processing in 
Section~\ref{sec:qry}.

\stitle{Local Processor module}
In \thor, \textsf{Local Processor} stores the exact locations of moving objects that belong to the same cell, as shown in Figure~\ref{fig:grid_index}(b).
Moreover, it processes the received queries accordingly.
A hash table is employed as the local index in each \textsf{Local Processor} instance to offer high throughput and efficient data migration.
We optimize the communication cost by designing execution mode on \textsf{Local Processor}, which will be introduced in Section~\ref{sec:cellgroup}.

\stitle{Aggregator module}
\thor{} aggregates the partial query results, which are sent by different \textsf{Local Processor} instances, and returns the final result to the users by \textsf{Aggregator} Module.
It is important to determine whether the partial results from every \textsf{Local Processor}  instance are received in the \textsf{Aggregator} module for distributed streaming query processing.
\thor{} achieves that by attaching the total number of incurred \textsf{Local Processor} instances to each routed query in the \textsf{Indexer} module.

\stitle{Metadata Synchronizer module}
The \textsf{Metadata Synchronizer} module synchronizes the meta-data of each cell in the \textsf{Local Processor} instance to the grid-based global index in \textsf{Indexer} module.
It plays a vital role to improve the efficiency of \thor{} when the data are deleted and inserted frequently in each cell.
We will introduce its details in Section~\ref{sec:meta}.

\stitle{Load Balancer module}
The \textsf{Load Balancer} module monitors the workload of each \textsf{Local Processor} instance and resolves imbalance issues when  necessary, e.g., the workloads of two \textsf{Local Processor} instances are very different, as we will show in Section~\ref{sec:loadbalance}.

\section{\thor{} Designs}\label{sec:dsg}
In this section, we present the key technical contributions in \thor{}.

\subsection{Fine-Grained Resource Management}\label{sec:rm}
As introduced in Section~\ref{sec:bkg}, \statefun{} is a serverless architecture, and it automatically manages hardware resources.
However, automatic resource management does not make full utilization of the resources for streaming spatial query processing.
In particular, \flink{} randomly allocates resources to each function instance.
According to our empirical experiments, it causes severe resource preemption among the working set of \statefun{} instances during query processing.

To address this problem, \thor{} provides fine-grained resource management via revising the task slot allocation scheme of \flink{}.
In particular, \thor{} gives an option to the developers, which allows them to allocate an exclusive task slot to a certain \statefun{} instance.
With fine-grained resource management in \thor, \statefun{} instances on \thor{} can be created on task slots with efficient resource utilization.
For example, \textsf{Metadata Synchronizer} and \textsf{Load Balancer} are allocated exclusive task slots in \thor{} in order to avoid unnecessary overheads, which are incurred by task slot preemption with the \textsf{Local Processor} instances.

\subsection{Local Processor Execution Mode}\label{sec:cellgroup}
The global index (i.e., grid-based index) in \thor{}  divides the geographic region into cells.
For the \textsf{Local Processor} execution mode, a straightforward solution is mapping each cell to a \textsf{Local Processor}, denoted as \textit{one-to-one} mode,
and then each \textsf{Local Processor} instance is instantiated and executed in a task slot.
However, this solution has three major issues.
First, the number of cells in the grid-based index can be tens of thousands or even more. 
Thus, it is impractical to assign each cell to a \textsf{Local Processor} as it incurs expensive overhead to schedule every \textsf{Local Processor} instance during query processing.
Second, the \textsf{Local Processor} instances of a query are executed by a fixed set of task slots (determined by the configurations of the physical computing cluster), which incur severe resource contention.
Third, the spatial locality of adjacent cells is ignored in \textit{one-to-one} mode, which can be exploited to accelerate query processing.

\begin{figure}
    \small
	\centering
	\includegraphics[width=0.85\columnwidth]{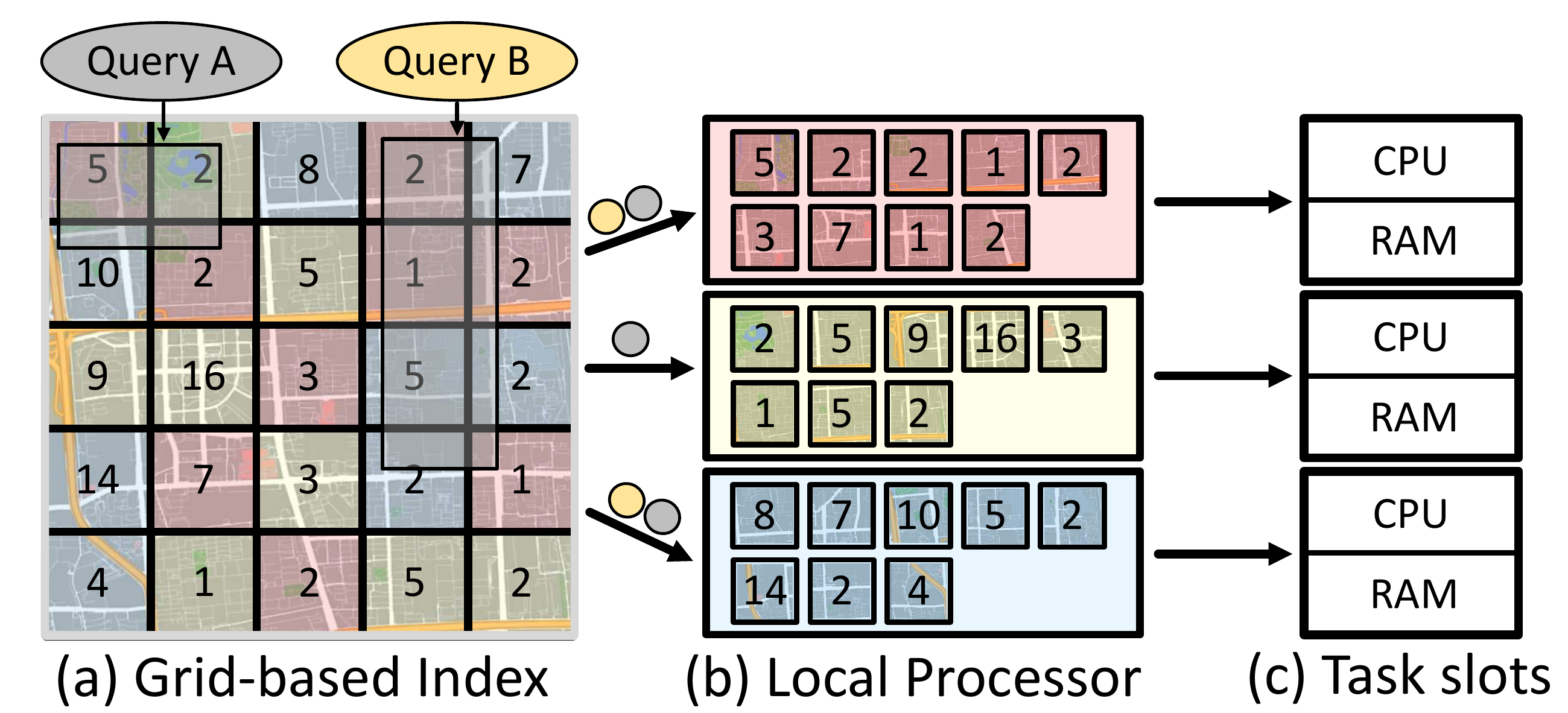}
	\trim
	\caption{Many-to-one execution mode of \textsf{Local Processor}}
	\label{fig:combiner}
    \trim
\end{figure}

To address the above limitations, we devise a \textit{many-to-one} execution mode in \thor.
Specifically, we assign multiple cells to one \textsf{Local Processor}.
As illustrated in Figure~\ref{fig:combiner}, the cells with the same background color are assigned to the same \textsf{Local Processor}.
There are three \textsf{Local Processors}, as shown in Figure~\ref{fig:combiner}(b), for all the cells in the grid-based index in Figure~\ref{fig:combiner}(a).
During query processing, \thor{} routes a query and the corresponding cells to these \textsf{Local Processors}.
For example, Figure~\ref{fig:combiner} depicts that Query A is routed to three \textsf{Local Processors}, and Query B is routed to two \textsf{Local Processors}.
These \textsf{Local Processors} are instantiated to \textsf{Local Processor} instances and are executed by the scheduled task slot in \thor.

The effectiveness of the \textit{many-to-one} execution mode of \textsf{Local Processor} are: (i) it exploits the spatial and temporal localities of the moving objects, as the cells in the same \textsf{Local Processor} can be processed in one batch;
and
(ii) the number of \textsf{Local Processors} is configurable, which can be set according to the number of task slots in the physical computing cluster.
It effectively reduces resource contention (e.g., task slots, network bandwidth).
We will evaluate its effectiveness in Section~\ref{sec:exp}.

\subsection{Metadata Synchronizer}\label{sec:meta}
In \thor, we design \textsf{Metadata Synchronizer} to update the metadata of every \textsf{Local Processor} in \textsf{Indexer}.
In particular, each \textsf{Local Processor} only applies the assigned operators (e.g., data deleting/inserting) on its cells, and thus it does not have the global information of other \textsf{Local Processors}.
However, the metadata of every cell in the grid-based index is frequently used during query processing (e.g., pruning unqualified candidate cells, and balancing workload).

The challenges to design an ideal \textsf{Metadata Synchronizer} are two-fold.
First, the streaming updates of massive moving objects change the metadata in each cell rapidly.
For example, the object count of a cell changes dramatically as many moving objects frequently moves from a cell (i.e., delete) to its adjacent cell (i.e., insert).
Second, the cost to synchronize metadata from \textsf{Local Processors} to \textsf{Indexer} is not negligible.
Thus, it is impractical to provide the latest metadata of each cell in \textsf{Indexer} under a high data update rate.

We devise the \textsf{Metadata Synchronizer} to address these challenges in \thor{}.
Specifically, it collects and integrates metadata from \textsf{Local Processors}, and broadcasts them to \textsf{Indexers}. 
To avoid resource contention, we assign an exclusive task slot to the instance of \textsf{Metadata Synchronizer}.
To reduce the network transmission cost, the \textsf{Metadata Synchronizer} collects metadata from the \textsf{Local Processors} in two manners.
\squishlist
\item \sstitle{Periodic synchronization} If the metadata of the cell in a \textsf{Local Processor} is outdated (e.g., exceeds a time threshold), the \textsf{Local Processor} will send the metadata of this cell to the \textsf{Metadata Synchronizer}.
\item \sstitle{Accumulative synchronization} If the value of a metadata changes significantly (e.g., exceed a pre-defined threshold), the \textsf{Local Processor} also will send it to \textsf{Metadata Synchronizer}.
\squishend

For metadata collected by periodic synchronization, the \textsf{Metadata Synchronizer} integrates them into a hash table, and periodically broadcasts the metadata of the cells that have been changed since the last broadcasting, to all \textsf{Indexers}.
For metadata collected by accumulative synchronization, \textsf{Metadata Synchronizer} immediately broadcasts it to all Indexers after it is received.
In addition, it is important to guarantee the freshness of the metadata in \textsf{Indexer}, as out-of-date metadata will cause performance degradation during query processing.
We propose metrics to measure the freshness of the metadata in \textsf{Indexers}, and allow users to set tunable thresholds for them.

\subsection{Load Balancer}\label{sec:loadbalance}
In distributed computing systems, load imbalance issues have significant impacts on system performance.
Thus, it is important to maintain balanced workloads among different working machines in \thor.
In particular, possible load imbalance issues in \thor{} include:
(i) the queries concentrated on some hot cells in the grid-based index,
and
(ii) the compute workload of \textsf{Local Processor} instances are quite different.
To alleviate the imbalance issues in \thor, we use random shuffle as the basic load balancing scheme, by  following~\cite{ramakrishnan2012balancing}.
Specifically, \textsf{Transformer} first sends data to a random \textsf{Indexer} instance, then the \textsf{Indexer} instance assigns the cells in the grid-based index to different \textsf{Local Processor} randomly, and the partial results of \textsf{Local Processor} instances aggregated by \textsf{Aggregator} instance via random scheduling task slot.

Even with the random shuffle, the workload bias still exists in \thor{} as the moving objects are frequently moving among these cells in the grid-based index.
Thus, we design a \textsf{Load Balancer} module in \thor, to alleviate the balance bias issue dynamically.
Specifically, it first monitors the imbalance among all \textsf{Local Processor} instances by defining a workload imbalance degree metric, then it immediately remedies the imbalance \textsf{Local Processor} instances when the workload imbalance metric value is larger than a given threshold.
The core idea of imbalance remedy is moving cells from high workload \textsf{Local Processor} to low workload \textsf{Local Processor}.

\subsubsection{Workload Metric of Local Processor Instance}
Suppose the cell group in \textsf{Local Processor} instance $p$  is $C_p$,
We first define the workload of each cell $c$ in $C_p$ as the number of computation operators (e.g., data updates and query processes) in it for a given time period $\Delta T$.
Specifically, the workload for each cell $c \in C_p$ is measured as
$$
W(c) = \frac{|U(c)| + \sum_{q \in Q(c)}N(q, c)}{\Delta T},
$$
where $U(c)$ is the set of data updating operators (i.e., \textsf{delete} and \textsf{insert}) received by $c$ in the time period $\Delta T$.
$Q(c)$ is the set of queries, which processed the moving objects in cell $c$, in the time period $\Delta T$, and $N(q,c)$ is the number of visited moving objects stored in $c$ during the processing of query $q$.

We then measure the workload of \textsf{Local Processor} instance $p$ by taking the summation of every cell's workload in it $W(p) = \sum_{c \in C_p}W(c)$.
In \thor, the \textsf{Load Balancer} model measures the workload on each \textsf{Local Processor} instance periodically, i.e., set $\Delta T$ as a constant value.
However, the \textsf{Load Balancer} model is customizable two-fold: (i) the workload metric of each \textsf{Local Processor} instance,  and (ii) the time period $\Delta T$ in each instance.

\subsubsection{Imbalance Degree among All \textsf{Local Processor} Instances}
The imbalance degree among all \textsf{Local Processor} Instances is defined as follows.
$$
Degree(P) = \frac{1}{|P|}\sum_{p \in P}(W(p) - \overline{W(P)})^2,
$$
where $P$ is the set of all \textsf{Local Processor} instances and $\overline{W(P)}$ is the average workload of all $W(p)$, i.e., the workload of every \textsf{Local Processor} instances.

If the value of imbalance degree $Degree(P)$ exceeds the pre-defined system imbalance bias tolerance value, the \textsf{Load Balancer} model will incur an imbalance remedy strategy, as we introduce shortly, to alleviate it.

\subsubsection{Hardness Analysis of Imbalance Remedy Problem}
\thor{} remedies the imbalance on \textsf{Local Processor} instances by moving the cells in heavy workload \textsf{Local Processor} instances to the light workload one(s).
We first define the imbalance remedy problem in Problem~\ref{prob:remedy}.

\begin{problem}\label{prob:remedy}
Given a set of \textsf{Local Processor} instances $P$, and each \textsf{Local Processor} instance $p \in P$ contains a group of cells.
The workload of each cell $c \in p$ is $W(c)$.
The imbalance remedy problem is finding a imbalance remedy plan $\mathcal{M}$ which transforms $P$ to $P^\prime$ such that the imbalance degree of $P^\prime$ is lower than a threshold $\theta$,
i.e., $Degree(P^{\prime}) \leq \theta$.
\end{problem}

The hardness of Problem~\ref{prob:remedy} is illustrated in Lemma~\ref{lem:nphard}.

\begin{lemma}\label{lem:nphard}
The imbalance remedy problem (Problem~\ref{prob:remedy}) is NP-hard.
\end{lemma}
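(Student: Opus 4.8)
The plan is to reduce a known NP-hard partition-style problem to the imbalance remedy problem. The natural candidate is the classical \textsc{Partition} problem (or, more directly, \textsc{Minimum Makespan Scheduling} / \textsc{Multiprocessor Scheduling}), which asks whether a multiset of positive integers can be split into two groups of equal sum. Given a \textsc{Partition} instance with values $a_1,\dots,a_n$ and total sum $S = \sum_i a_i$, I would construct an instance of Problem~\ref{prob:remedy} with $|P| = 2$ \textsf{Local Processor} instances and $n$ cells $c_1,\dots,c_n$, setting the workload $W(c_i) = a_i$. The initial assignment can place all cells on a single instance (so one instance has workload $S$ and the other $0$). I would then set the threshold $\theta$ to the value of $Degree(P')$ that corresponds exactly to the balanced split: since $\overline{W(P)} = S/2$ always, an assignment with loads $W_1$ and $W_2 = S - W_1$ gives $Degree(P') = \tfrac{1}{2}\big((W_1 - S/2)^2 + (W_2 - S/2)^2\big) = (W_1 - S/2)^2$, which is $0$ iff $W_1 = W_2 = S/2$. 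Hence choosing $\theta = 0$ (or any value smaller than the smallest achievable positive imbalance, e.g. $\theta$ slightly below $1/4$ after clearing denominators) makes ``$Degree(P') \le \theta$'' equivalent to the existence of an equal-sum partition.

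The key steps, in order, are: (i) formally state the decision version of Problem~\ref{prob:remedy} (does there exist a remedy plan $\mathcal{M}$ achieving $Degree(P') \le \theta$?) and observe it is the object whose hardness we establish; (ii) recall that \textsc{Partition} is NP-hard; (iii) give the polynomial-time mapping described above from a \textsc{Partition} instance to a remedy-problem instance, noting that moving cells between the two instances is exactly choosing a subset of $\{a_i\}$ to relocate, so the reachable configurations are precisely all bipartitions of the cells; (iv) verify the equivalence ``\textsc{Partition} is a YES-instance $\iff$ the constructed remedy instance admits a plan with $Degree(P') \le \theta$'' using the computation of $Degree$ above; (v) observe the reduction is computable in time polynomial in $n$ and the input size, completing the argument that Problem~\ref{prob:remedy} is NP-hard.

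The main obstacle I anticipate is getting the threshold $\theta$ and the arithmetic around $Degree(P')$ exactly right so that the reduction is tight. Because $Degree$ is a sum of squared deviations divided by $|P|$, and the $W(c)$ values are (after scaling) integers, the set of achievable $Degree$ values is discrete, so there is a gap between $0$ and the next smallest value; I must pick $\theta$ strictly inside that gap to make the reduction correct, and I should argue this gap is bounded below by an inverse polynomial so that $\theta$ has a polynomial-size description. A secondary subtlety is making sure the ``remedy plan $\mathcal{M}$'' model in the paper genuinely allows reaching an arbitrary bipartition of the cells over the two instances (i.e., there is no hidden restriction on which cells may move or a per-move cost); if the model is more restrictive, I would instead reduce from \textsc{3-Partition} or \textsc{Bin Packing} with an appropriately padded instance, but the cleanest route — and the one I would present — is the two-machine \textsc{Partition} reduction sketched above.
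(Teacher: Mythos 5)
Your proposal is correct and follows essentially the same route as the paper's own proof: a reduction from \textsc{Partition} with two \textsf{Local Processor} instances, all cells initially on one instance with $W(c_i)=a_i$, and threshold $\theta=0$, so that $Degree(P')=0$ holds iff an equal-sum bipartition exists. Your extra care about the discreteness gap for a nonzero $\theta$ is sound but unnecessary, since $\theta=0$ already suffices, exactly as the paper does.
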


\begin{proof}
It can be proved by reducing the well-known PARTITION problem~\cite{garey1979computers}, which is NP-hard, to the imbalance remedy problem.

\stitle{PARTITION problem} Consider $S$ is a set of numbers, the decision version of PARTITION problem returns whether the numbers can be partitioned into two sets $A$ and $A^\prime = S - A$ such that $\sum_{a \in A} a = \sum_{a \in A^\prime} a$.

We simplify imbalance remedy problem (Problem~\ref{prob:remedy}) by setting (i)  $P=\{p_1, p_2\}$ and (ii) $\theta = 0$. The simplified problem asks whether there exists a imbalance remedy plan to make $Degree(P') = 0$, i.e., $\sum_{c \in C_{p_1}}W(c) = \sum_{c \in C_{p_2}}W(c)$.

Given an instance of the PARTITION problem, we construct an instance of the simplified imbalance remedy problem as follows. First, set \textsf{Local Processor} instance $p_1 = \emptyset$, Second, for each number $a \in S$, insert a cell $c$ into \textsf{Local Processor} instance $p_2$ with $W(c) = a$.
The PARTITION problem is equivalent to a simplified imbalance remedy problem.
Thus, the simplified imbalance remedy problem is also NP-hard.
\end{proof}

\subsubsection{Imbalance Remedy Strategy}
The challenge of the imbalance remedy strategy is how to minimize the imbalance degree of the whole system.
In addition, the cell movement during the imbalance remedy process should not occupy too much network bandwidth to guarantee its efficiency.
We propose a greedy algorithm (see Algorithm~\ref{alg:remedy}) to remedy the imbalance issue heuristically.
The general idea of the greedy algorithm is as follows.
We first sort the \textsf{Local Processor} instances by the descending order of their workload metric value.
Then, we consider every \textsf{Local Processor} instance $p$ as a high workload instance and try to move some of its cells out to decrease the imbalance degree of the whole system.
Specifically, for each cell $c$ in the \textsf{Local Processor} instance, we move it to the \textsf{Local Processor} instance which currently has the minimum workload,
as the top element in the minimum heap $\mathcal{H}$ in Line~\ref{ln:heap}.
After that, we obtained a imbalance remedy plan $\mathcal{M}_{tmp}$ for the considering \textsf{Local Processor} instance $p$.
We next compute the imbalance degree of all \textsf{Local Processor} instances after moving all cells in the imbalance remedy plan $\mathcal{M}_{tmp}$,
and maintain the best plan we find so far $\mathcal{M}$, see Lines~\ref{ln:ifs} to ~\ref{ln:ife} in Algorithm~\ref{alg:remedy}.
The best imbalance remedy plan $\mathcal{M}$ is returned and applied by the \textsf{Load Balancer}.

\begin{algorithm}
 \caption{\textsf{ImbalanceRemedy}($P$)}\label{alg:remedy}
 \begin{algorithmic}[1]
    \State sort $P$ by the descending order of $\forall p \in P, W(p)$
    \State initialize imbalance degree $bsf \leftarrow 0$
    \State initialize imbalance remedy plan $\mathcal{M} \leftarrow \emptyset$
    \For {each $p$ in $P$}
        \State initialize the imbalance remedy plan $\mathcal{M}_{tmp} \leftarrow \emptyset$
        \State create min-heap $\mathcal{H}$ for $p_i \in P - \{ p \}$ by its $W(p_i)$
        \State push $\langle 0, p \rangle$ into $\mathcal{H}$
        \For{each cell $c$ in the cell group of $p$}
            \State $\langle w, tmp_p \rangle \leftarrow \mathcal{H}.pop()$ \label{ln:heap}
            \State push  $\langle w+W(c), tmp_p \rangle$ into $\mathcal{H}$
            \State insert $\langle c, p, tmp_p \rangle$ into $\mathcal{M}_{tmp}$
        \EndFor
        \State $\mathcal{P}_{tmp} \leftarrow$ all instances in $\mathcal{H}$
        \If{$bsf < Degree(P) - Degree(P_{tmp}) $} \label{ln:ifs}
            \State $bsf \leftarrow Degree(P) - Degree(P_{tmp})$
            \State $\mathcal{M} \leftarrow \mathcal{M}_{tmp}$
        \EndIf \label{ln:ife}
    \EndFor
    \State return $\mathcal{M}$
 \end{algorithmic}
\end{algorithm}

\begin{figure}
    \small
	\centering
	\includegraphics[width=0.90\columnwidth]{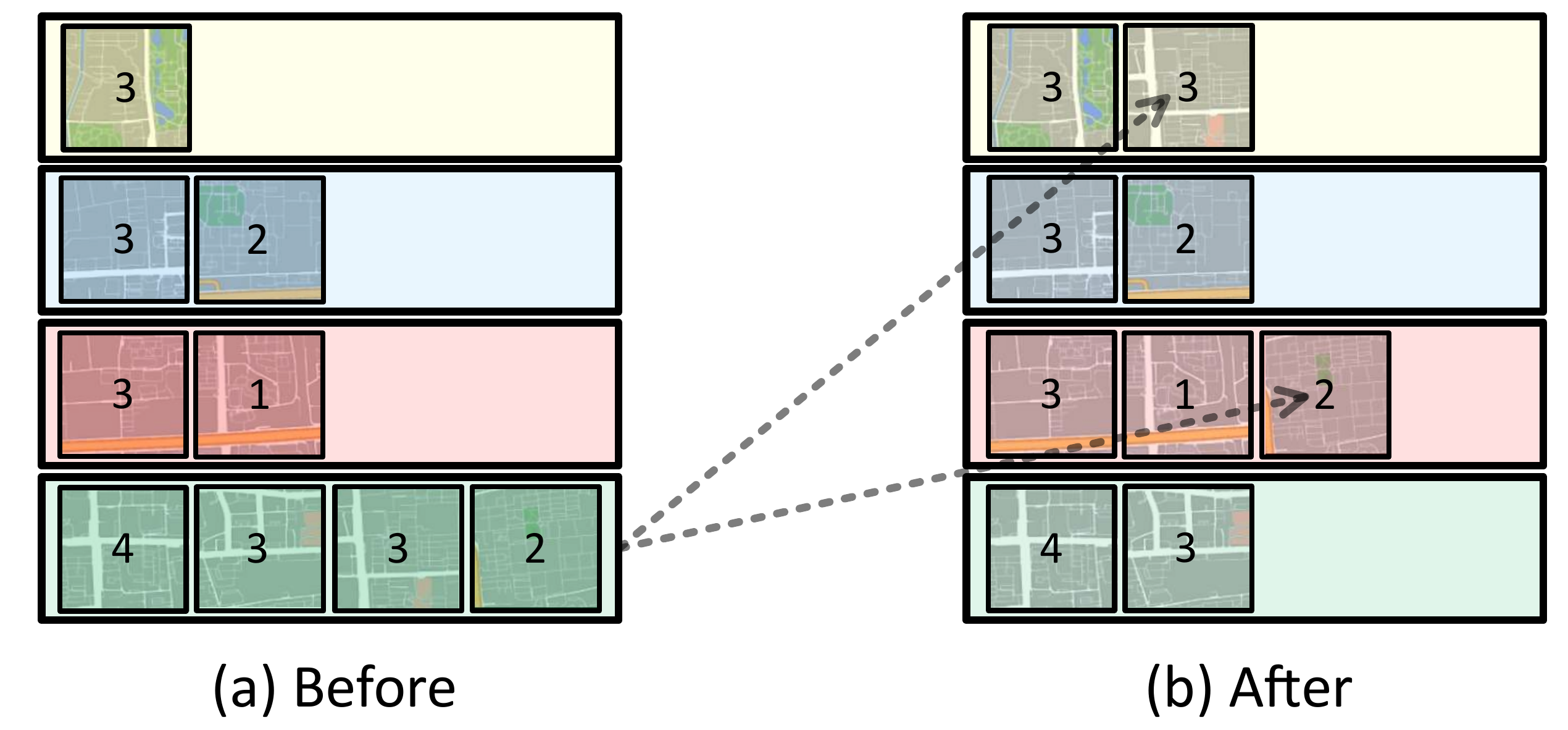}
	\trim
	\caption{Imbalance remedy example}
	\label{fig:irexample}
    \trim
\end{figure}

We illustrate the effect of our heuristic imbalance remedy algorithm (in Algorithm~\ref{alg:remedy}) by the concrete example in Figure~\ref{fig:irexample}.
Figure~\ref{fig:irexample} shows the set of \textsf{Local Processor} instance $P$ before we invoke the imbalance remedy algorithm.
The value in each cell illustrates its workload.  The imbalance degree of $P$ is 12.5 as the workload of the last \textsf{Local Processor} instance (it consists of 4 cells) is much larger than the rest instances.
Applying the above greedy remedy algorithm, the cells with workloads 3 and 2 will be moved out to the first and third \textsf{Local Processor} instances, respectively, as the arrow shown in Figure~\ref{fig:irexample}.
As a result, the imbalance degree of the whole system turns to 0.5.

\subsubsection{Cell Movement Method of Imbalance Remedy Plan}
A straightforward method to move the cells in the imbalance remedy plan is suspending all the input of data moving and queries until the cell movement completes.
However, it is impractical as it severely affects the overall latency and throughput in \thor{}.
To address this, we devise a cell-moving method that does not require halting the query processing, while ensuring the correctness of data operations and queries during the imbalance remediation phase.

Specifically, the cell $c$ in the moving-out \textsf{Local Processor} instance $p$ are packaged and sent to the moving-in \textsf{Local Processor} instance $p_{tmp}$.
In particular, \thor{} does not delete the data objects in $c$ at instance $p$ immediately,
it will delete all these moving-out cells after the cell moving phase completes in it.
In addition, we fixed the size of each package (e.g., each package only includes 10,000 data objects) during the cell moving process to avoid too long data movement in \thor.
The cell moving phase completes when all tuples (i.e., $\langle c, p, p_{tmp} \rangle$) in imbalance remedy plan $\mathcal{M}$ are processed.

After cell movement completed, the \textsf{Load Balancer} broadcasts the new \textsf{Local Processor} of moved cells to all \textsf{Indexers}, so that the grid-based index in every \textsf{Indexer}  map the moved cell to its \textsf{Local Processor} instance correctly.
The rebalance remedy procedure is finished after all \textsf{Indexers} are updated.
We suggest that the period to invoke \textsf{Load Balancer} model should not be too frequent
as it incurs slight overhead during the whole imbalance remedy procedure.

\subsubsection{Discussion}
Load balancing is an important and hot research topic in the distributed systems community. Numerous studies have been conducted in the literature~\cite{kolb2012load, ramakrishnan2012balancing, bindschaedler2018rock, daghistani2021swarm, fang2021dragoon}. 
Some of these methods may not be a good choice for \thor{} due to the overhead associated with data repartitioning and index modification when the system experiences frequent data updates.
Others may serve as alternative methods for remedying imbalance in \thor{}. We leave the detailed adoption and comparison of these methods as future work.

\section{Query Processing}\label{sec:qry}
In this section, we first illustrate the query processing procedure in \thor. 
Then we present its query processing paradigm to confirm the generality of \thor{}. 

\subsection{Typical Query Processing Procedure} 

\stitle{Object query}
The \textsf{Transformer} module can be used to answer a special kind of query, i.e., object query, directly.
object query (see Definition~\ref{def:objqry}) is a simple query, which usually is used to find the location of a moving object (e.g., a car or a bus) in real-world applications.

\begin{definition}\label{def:objqry}
Given the moving object ID $o_i$, the object query returns the latest location of $o_i$.
\end{definition}

The query processing procedure of object query is that: (i) \thor{} received the object query with moving object id $o_i$, and shuffles the query to the \textsf{Transformer} module as it maintains a hashing table, which records the latest position of the moving objects in \thor.
The query is answered by simply looking up the hash table in \textsf{Transformer} and returns the location as the result.

\stitle{Range count query}
The range count query (see Definition~\ref{def:rangequery}) returns the number of moving objects in a given region, it is widely used to analyze the traffic jam degree in intelligence transportation applications.

\begin{definition}\label{def:rangequery}
Given a query region $r$, the region count query returns the number of moving objects whose latest location lies in region $r$.
\end{definition}

\begin{figure}
    \small
    \centering
    \begin{tabular}{cc}
    \includegraphics[width=0.30\columnwidth]{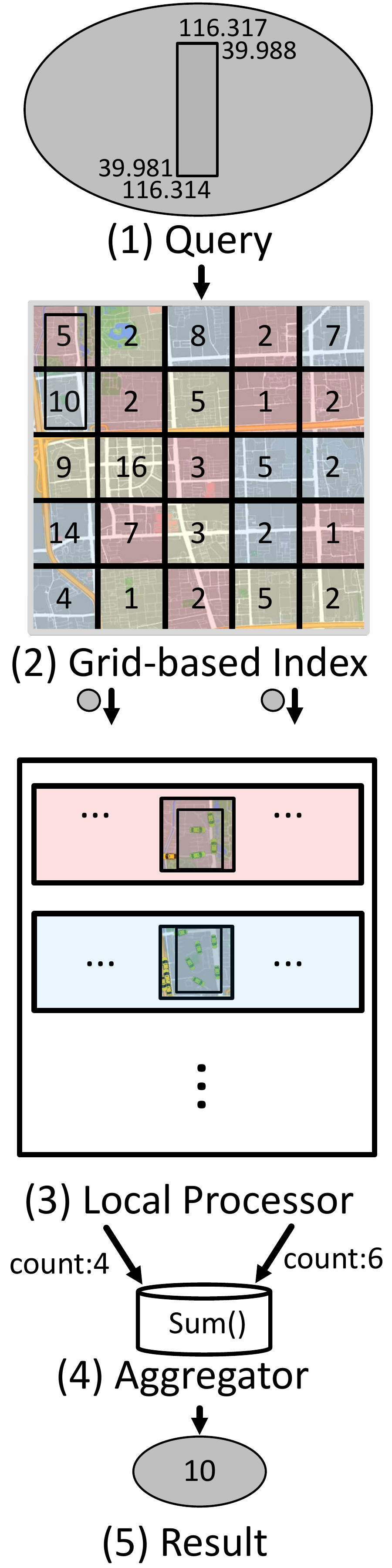}
    &
    \includegraphics[width=0.30\columnwidth]{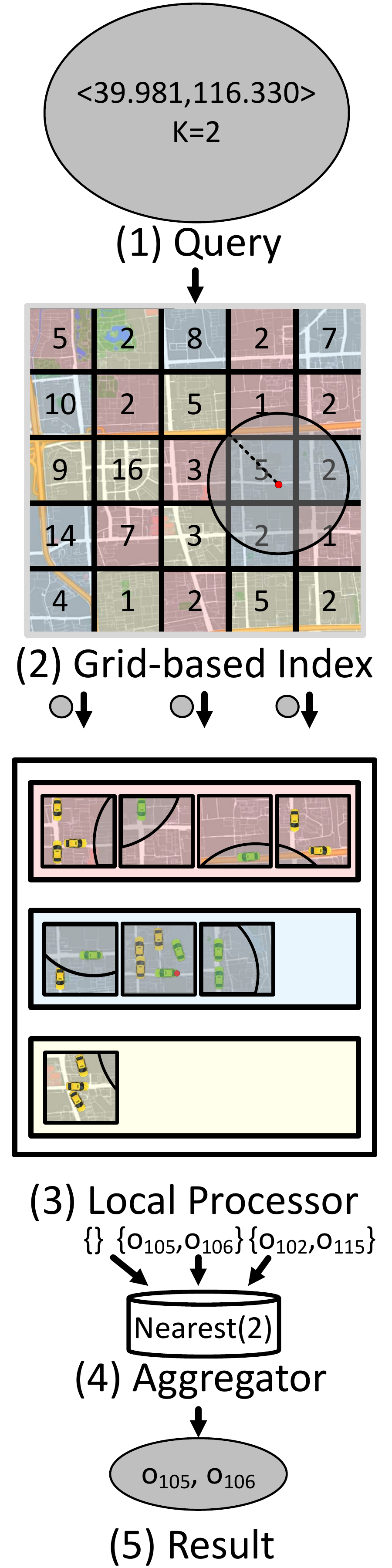}
    \\
    (a) range count query
    &
    (b) $k$NN query
    \end{tabular}
    \caption{Query processing in \thor}
    \label{fig:query}
    \trim
\end{figure}

The range count query processing procedure is illustrated in Figure~\ref{fig:query}(a).
Give a query region $r$, as Figure~\ref{fig:query}(a)(1) shown, the \textsf{Indexer} module first finds the intersection cells of $r$, both two intersected cells are illustrated in Figure~\ref{fig:query}(a)(2).
They belong to two different \textsf{Local Processor} instances, as shown in their background colors.
Thus, \thor{} instantiates two \textsf{Local Processor} instances to process the region count query with its underlying cells, see Figure~\ref{fig:query}(a)(3).
\thor{} schedules task slots to execute the above two \textsf{Local Processor} instances and return the local count result to the \textsf{Aggregator} model in Figure~\ref{fig:query}(a)(4),
and returns 10 as the final result of the given region count query with $r$.

\stitle{$k$ Nearest Neighbor Query}
The $k$ nearest neighbor query (see $k$NN query in Definition~\ref{def:knnquery}) finds the $k$ nearest moving objects to a given location.
It is the most frequent operator in ride-sharing services, e.g., Uber, and Didi.

\begin{definition}\label{def:knnquery}
Given a query location $q$ and an integer $k$, the $k$NN query returns the locations of $k$ objects whose distance to $q$ are top-$k$ smallest among all objects.
\end{definition}

Figure~\ref{fig:query}(b) shows an example. On \textsf{Indexer} module, \thor{} prunes the unqualified cells by exploiting the metadata of each cell.
Specifically, \thor{} identifies a minimum circle region, which guarantees it includes at least $k=2$ moving objects for the giving location $q$.
As shown in Figure~\ref{fig:query}(b)(2), \thor{} disqualifies all other cells except the 8 cells which are covered by the minimum circle region.
These 8 cells belongs to  3 \textsf{Local Processors}, as illustrated in Figure~\ref{fig:query}(b)(3).
In each \textsf{Local Processor} instance, it finds the top-$2$ nearest moving objects of query $q$ with the considering cells.
The local result of each \textsf{Local Processor} instance are aggregated in \textsf{Aggregator} module, see Figure~\ref{fig:query}(b)(4),
to return the final top-$2$ nearest moving objects (i.e., $o_{105}$ and $o_{106}$) in Figure~\ref{fig:query}(b)(5).

\stitle{Continuous query processing}
All the queries mentioned above are snapshot queries as one query result is immediately returned after the query is processed.
In addition, \thor{} also supports continuous query \cite{chen2021star}, which means users can register a continuous query, and then the query results are continuously refreshed. 

\subsection{Query Processing Paradigm in \thor{}}
With the above three representative queries, we summarize the query processing paradigm in the following steps.

First, \thor{} receives the user issued query, and processes it in \textsf{Transformer} if necessary. 
Then, \thor{} incurs \textsf{Indexer} to prune the unqualified cells of the query, then routes the query with candidate cell group to \textsf{Local Processors}.
Next, \thor{} computes the local result by scheduling every \textsf{Local Processor} instances to task slots.
Finally, \thor{} aggregates the local results in each \textsf{Local Processor} instance and returns the final result of the query.

In addition, the end-users also can issue user-defined queries by extending \thor{} by interpreting the queries in above steps.
Specifically, \thor{} abstracts the query processing steps into several functions, 
which can be combined to process the user-defined queries.

\section{Empirical Evaluation}\label{sec:exp}
In this section, we demonstrate the superiority of \thor{} via extensive experiments evaluation.

\subsection{Experimental Setting}\label{sec:expsetting}

\begin{table}
\small
\centering
    \caption{The statistics of datasets}\label{table:dataset}
    \trim
    \begin{tabular}{|c | c | c | c|}      \hline
      & \textsf{Shopping} & \textsf{Geolife} & \textsf{Brinkhoff} \\   \hline\hline
     \makecell{\# of Moving\\ objects} & 111,488 & 17,784 & 1,000,000 \\     \hline
     \# of Updates & 307,679,087 & 24,876,978 & 362,720,393\\    \hline
     Raw size & 19GB  & 1.4GB & 27GB\\   \hline
     \makecell{Space area\\ size}& \makecell{91 $\times$ 56\\ (m)} &  \makecell{Lon.:-149.8\textasciitilde174.3\\Lat:1.2\textasciitilde61.2} & \makecell{12 $\times$ 14\\(km)}\\   \hline
     \makecell{Default cell\\ size} & \makecell{$1.0 \times 0.667$ \\(m)} &  $5 \cdot 10^{-4} \times 5 \cdot 10^{-4}$ & \makecell{$5 \times 5$ \\(m)} \\      \hline
    \end{tabular}
\end{table}

\stitle{Dataset and query set} In this work, we use three different datasets (see Table~\ref{table:dataset}), which have a wide range of moving objects characteristics (e.g., the number of moving objects, the number of object updates, the space area size), to evaluate the performance of query processing in our proposal \thor{}.
In particular, \textsf{Shopping}\cite{shopping} includes the trajectories of visitors at the ATC shopping center in Osaka.
\textsf{Geolife}\cite{zheng2010geolife} consists of the trajectories of users, which are collected by Microsoft Research (Beijing).
\textsf{Brinkhoff}\cite{brinkhoff} is a synthetic dataset, which generates the trajectories on the road network of Las Vegas via the Brinkhoff generator. We follow~\cite{ding2018ultraman}\cite{fang2021dragoon} and use it to verify the scalability of our proposal.
We use \textsf{Shopping} as the default dataset, and the default cell size of the grid-based index for each dataset is shown in Table~\ref{table:dataset}.
We generate the query set of each dataset by following the distribution of the data points.
Specifically, we divide the whole space area into small regions and count the data points in each region.
Then we generate the queries in each region, and the number of generated queries in it is proportional to its data points.

\begin{figure*}
    \centering
    \begin{tabular}{ccc}
    \includegraphics[width=0.63\columnwidth]{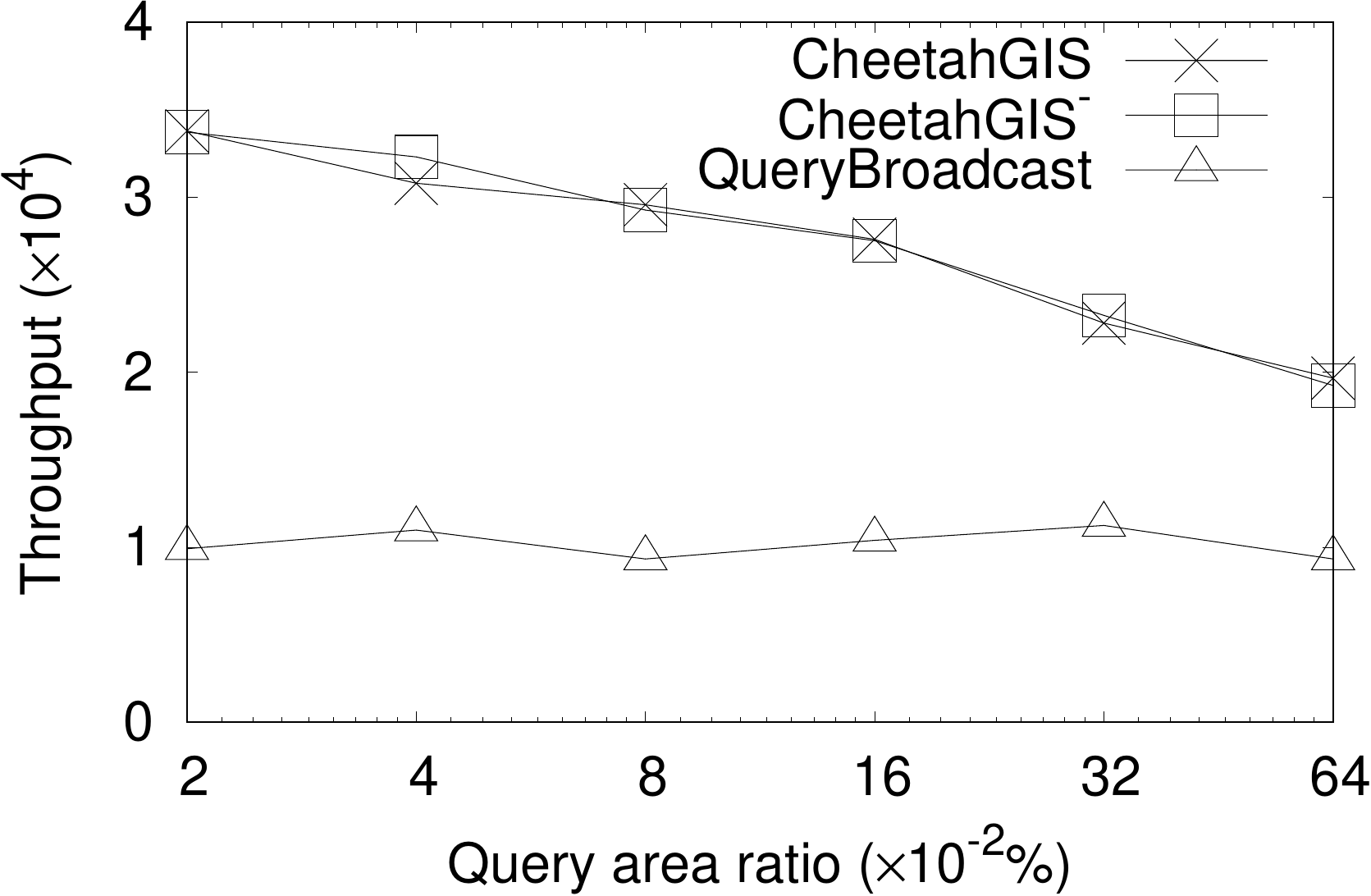}
    &
    \includegraphics[width=0.63\columnwidth]{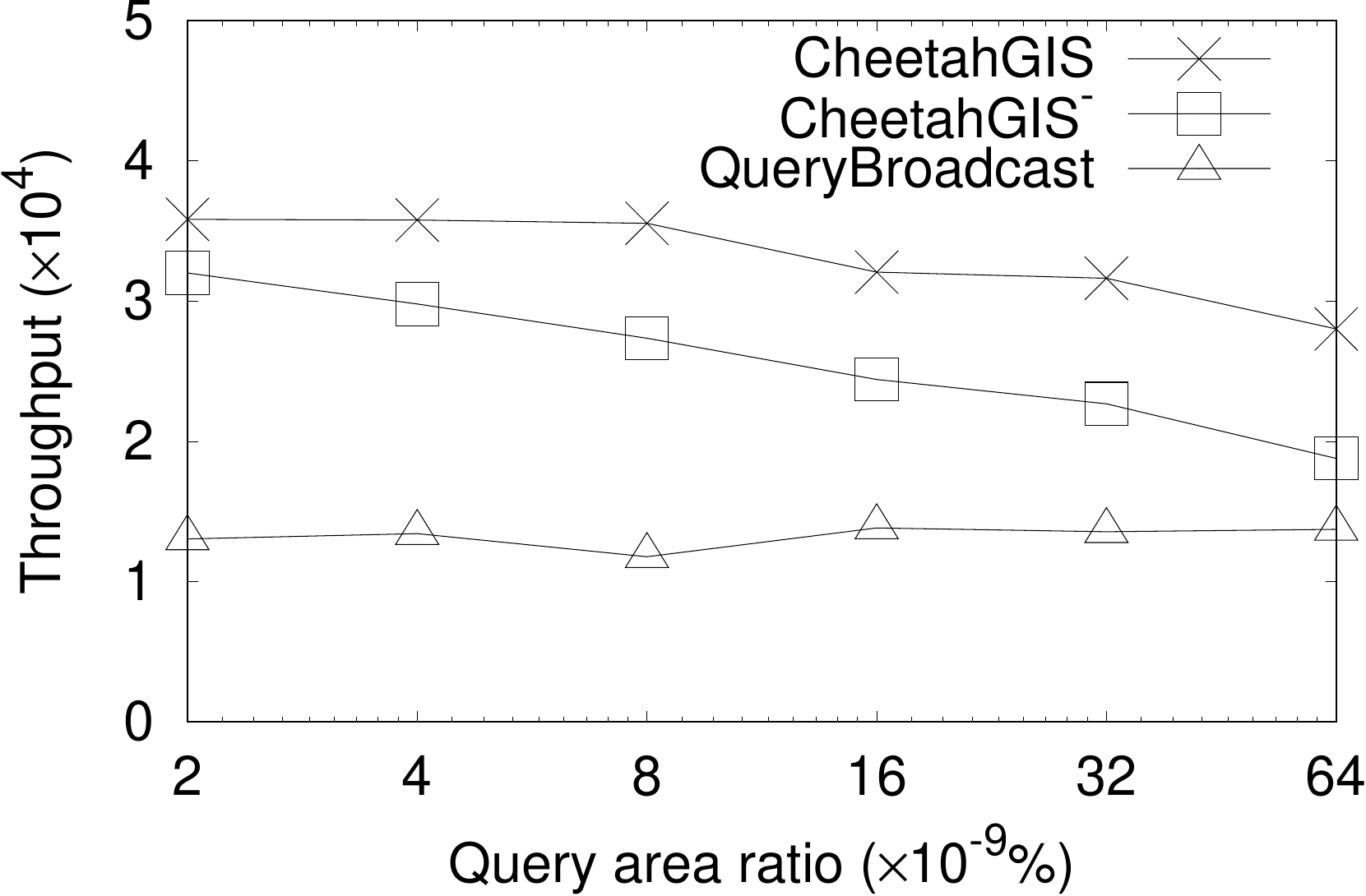}
    &
    \includegraphics[width=0.63\columnwidth]{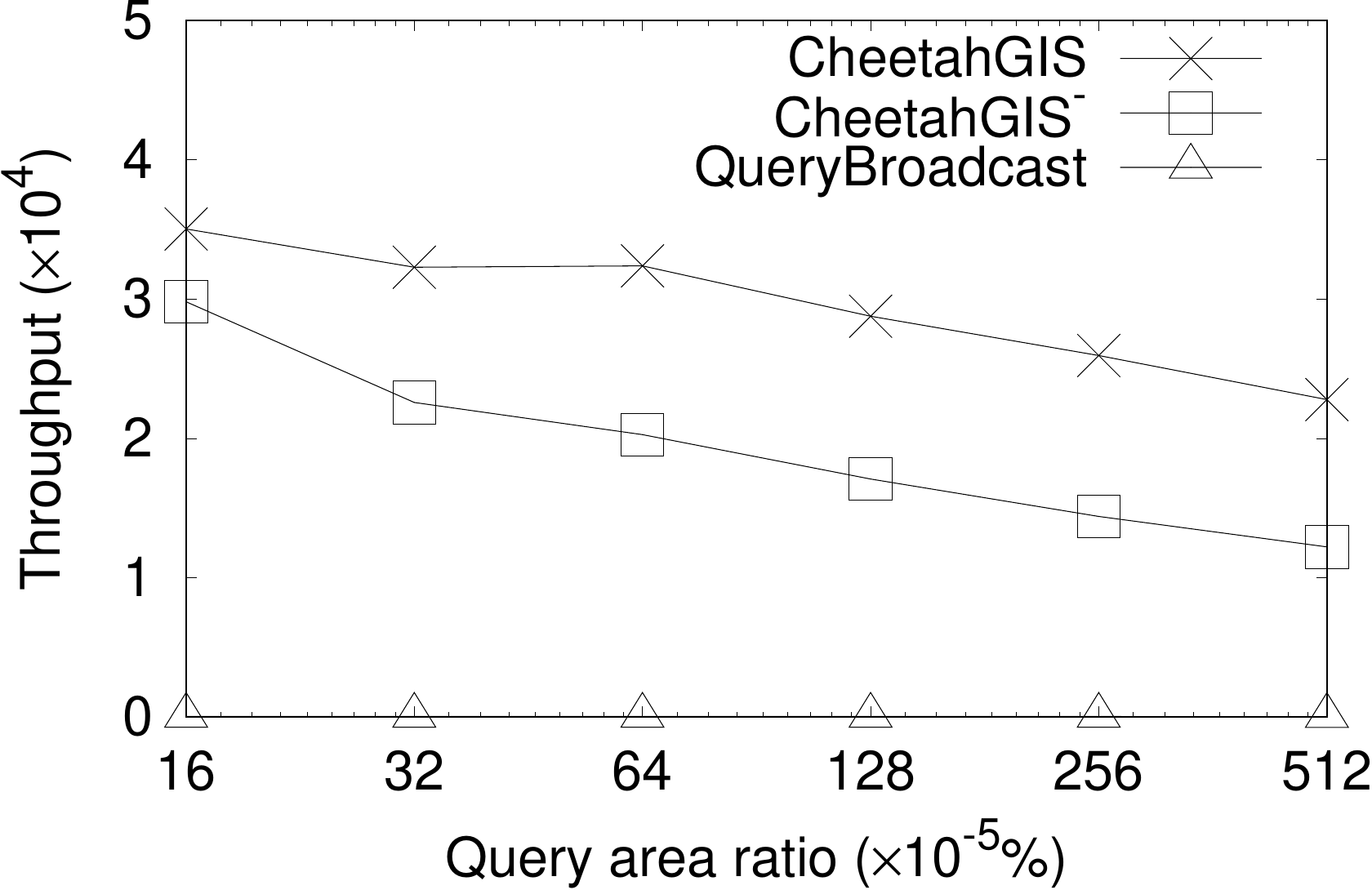}
    \\
    (a) \textsf{Shopping}
    &
    (b) \textsf{Geolife}
    &
    (c) \textsf{Brinkhoff}
    \end{tabular}
    \caption{Range count query performance evaluation, default cell size in each dataset}
    \label{fig:rngeva}
    \begin{tabular}{ccc}
    \includegraphics[width=0.63\columnwidth]{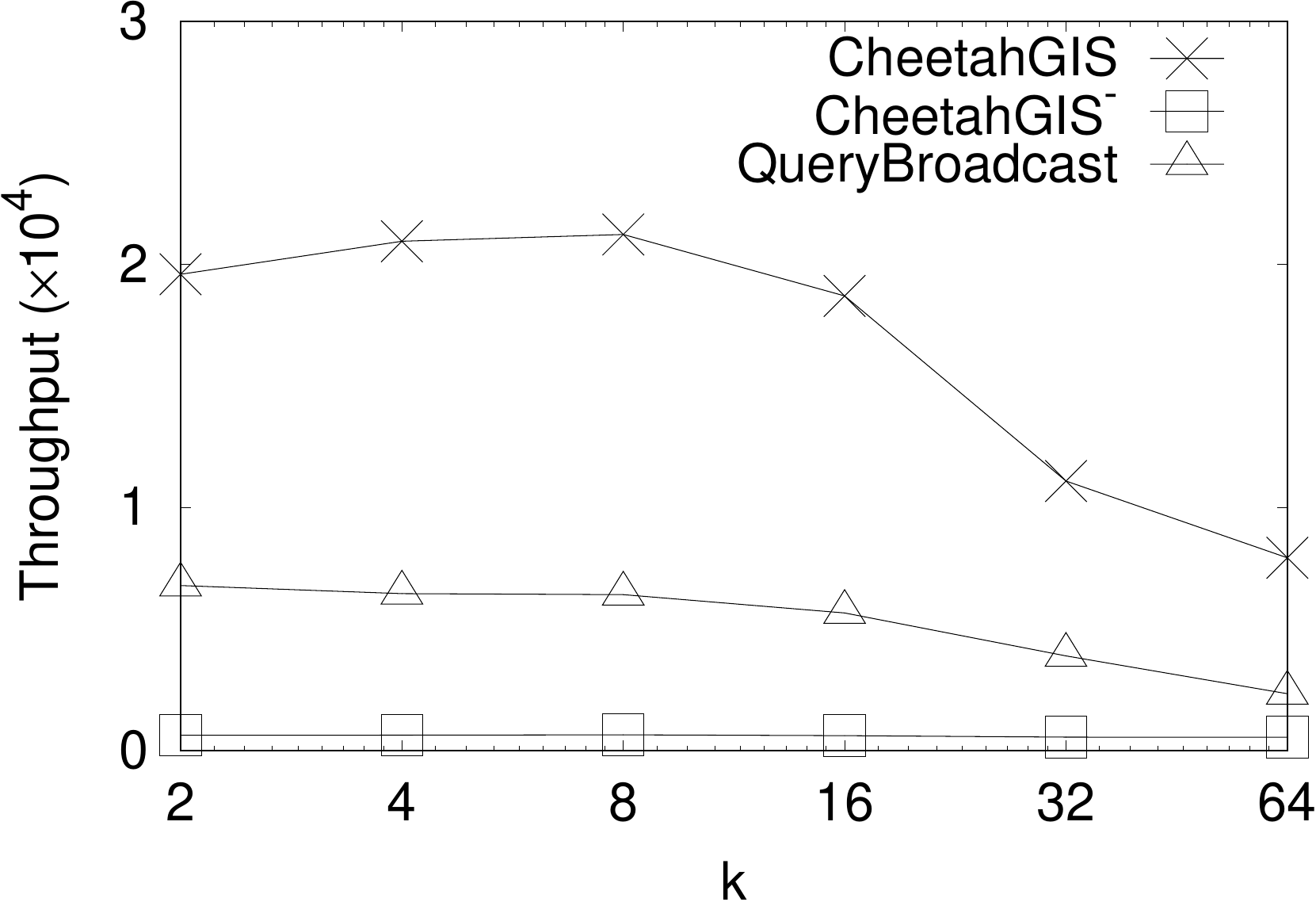}
    &
    \includegraphics[width=0.63\columnwidth]{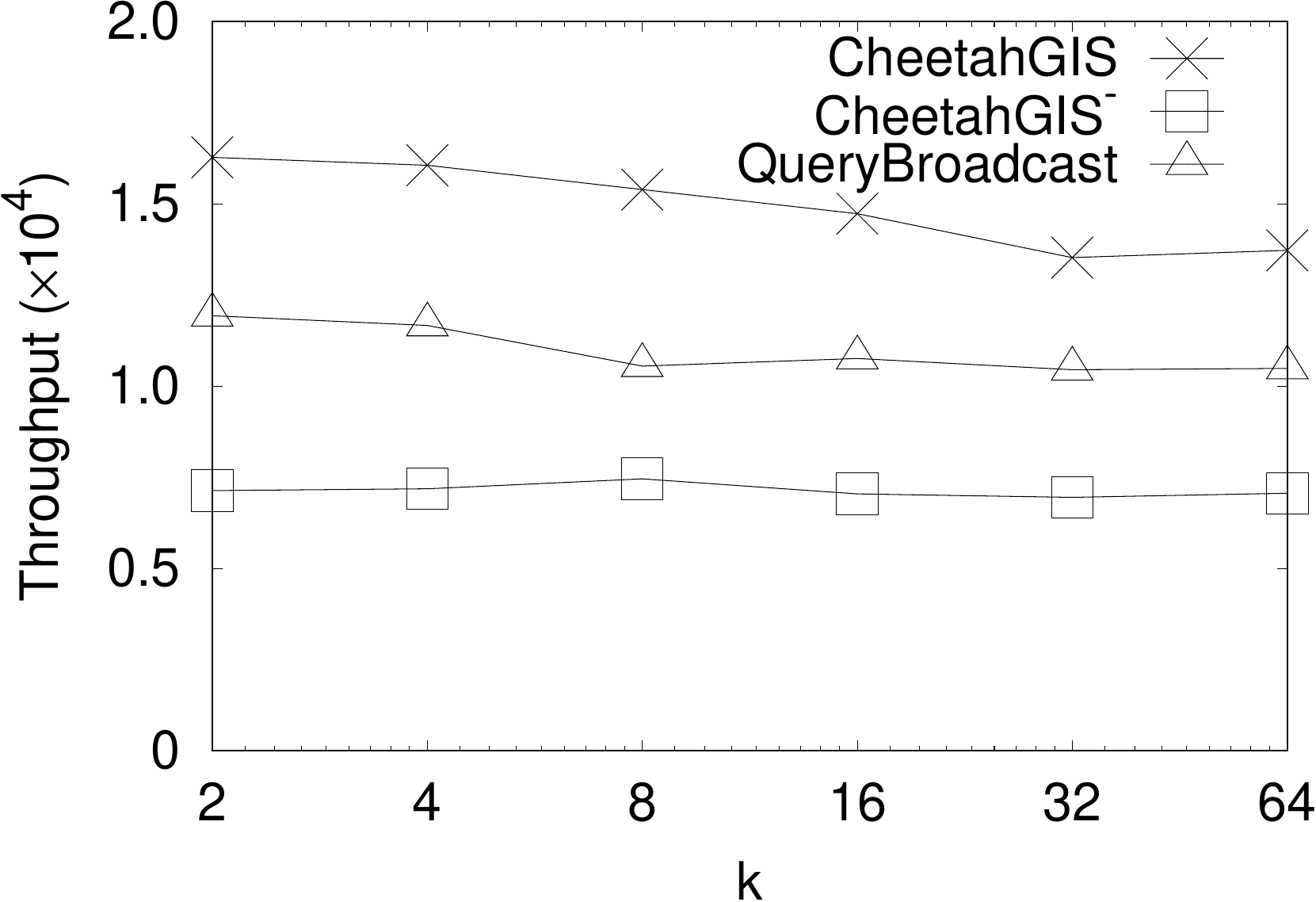}
    &
    \includegraphics[width=0.63\columnwidth]{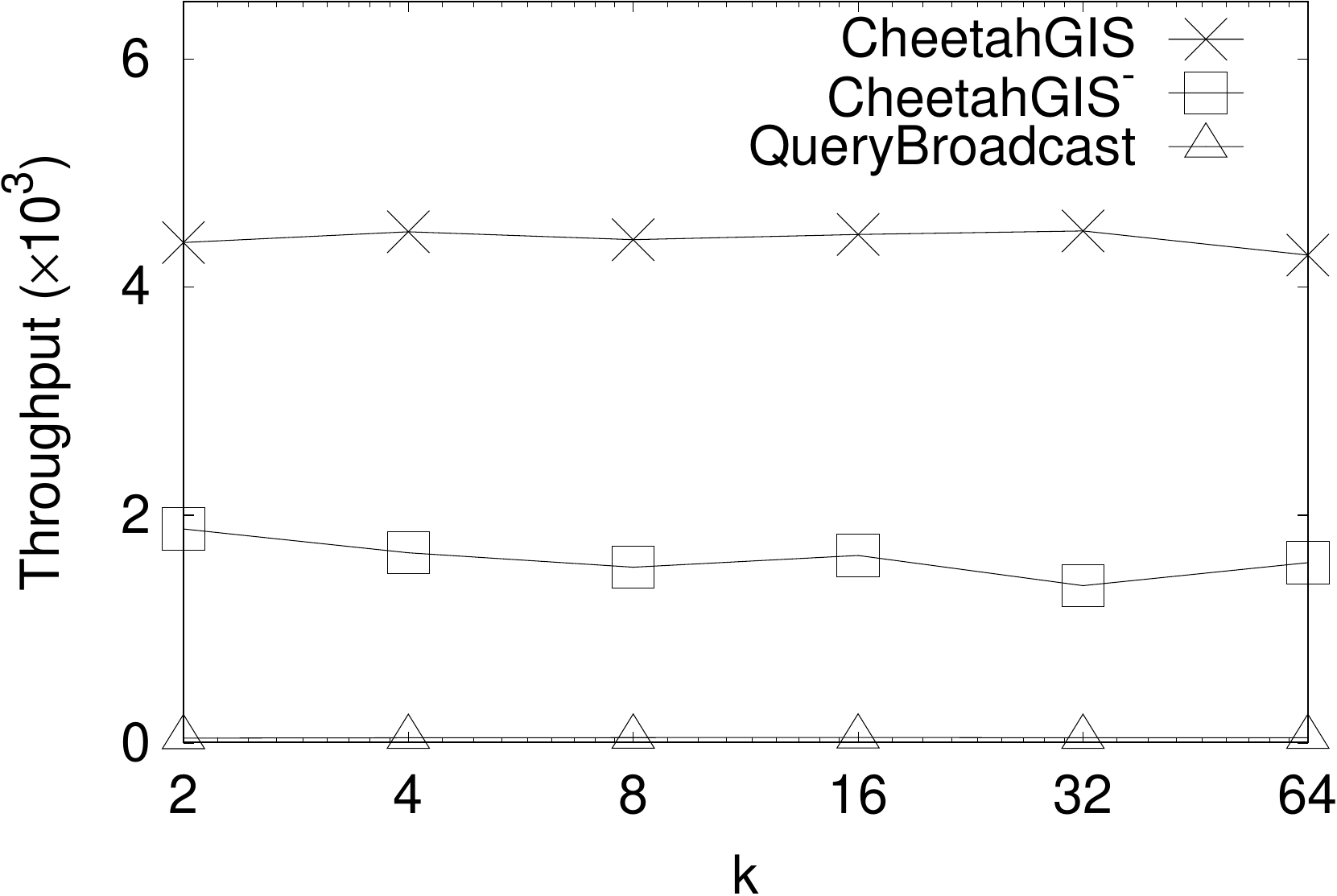}
    \\
    (a) \textsf{Shopping}
    &
    (b) \textsf{Geolife}
    &
    (c) \textsf{Brinkhoff}
    \end{tabular}
    \caption{$k$NN query performance evaluation}
    \label{fig:knneva}
\end{figure*}

\stitle{Compared solutions}

To evaluate the performance of \thor{}, we compare the following solutions.

\squishlist
    \item \bb: The general-purpose approach implementation based on \cite{zhang2016real}. It uses \statefun{}-based architecture but only broadcasts the query to every local processor and aggregates all the results from all \textsf{Local Processors} to obtain the result. 
    \item \thorm: It is a simple version of \thor. In particular, we evaluate the query processing performance of \thor{} by disabling its \textsf{Metadata Synchronizer} pruning techniques.
    \item \thor:It is our proposal in this work. All the discussed techniques and optimizations are enabled.
\squishend

We did not compare the performance with GeoFlink~\cite{shaikh2020geoflink}  as it processes queries fixed before runtime with window-based data streams, but our queries are only known during the runtime.


\stitle{Platform and measurements}
We use a distributed cluster with 10 working nodes (Intel Xeon Gold 5122 3.60GHz) in the experiments.
We use one of the working nodes as the master node, another one as Kafka broker (which produces and consumes data stream for \thor{}), and the rest are computing nodes.
We use Ubuntu 18.04 as the operating system and the version of Flink, Statefun, and Kafka are 1.11.2, 2.2.0, and 2.7.0, respectively.
For each computing node, we use 64GB memory and deploy 15 task slots.
All system modules and algorithms are implemented in Java.

We report two performance measurements (i.e., throughput and latency) in all experiments.
The throughput is defined as the total time elapsed (i.e., the total time from the first query received to the last query processed) divided by the total number of processed queries.
The latency is the average time cost of each query, i.e., from the query submitted to the system to its result returned.
All reported measurements are taking an average of three repeated experiments.

\subsection{Overall Performance Evaluation}\label{sec:overall}
We first evaluate the \thor{}'s performance of range count query processing on three different datasets by varying the query region size in Figure~\ref{fig:rngeva}.
First,  the throughput of \bb{} is stable by varying the region sizes as \bb{} broadcasts each query to every local processor.
Second, the throughput of both \thor{} and \thorm{} performs up to 1.2 and 1.9 times faster than \bb{} in all cases.
The reason is both \thor{} and \thorm{} prunes unqualified cells for each range count query via the grid-based index in \textsf{Indexer} module.
Third, the throughput of both \thor{} and \thorm{} falls with the increase of query region size in all three datasets.
The reason is that the larger the query region, the more qualified candidate cells.

Interestingly, the performance of range count query processing in \thor{} is much better than its in \thorm{} with \textsf{Geolife} and \textsf{Brinkhoff} datasets, as shown in Figures~\ref{fig:rngeva}(b) and (c), respectively.
However, both \thor{} and \thorm{} perform similar for range count query processing in \textsf{Shopping}, see Figure~\ref{fig:rngeva}(a).
The reason is that the pruning ability of the \textsf{Indexer} module is almost the same in \thor{} and \thorm{} with \textsf{Shopping} dataset,
but it is quite different in \thor{} and \thorm{} when the dataset is \textsf{Geolife} or \textsf{Brinkhoff}.
In particular, all the moving objects are randomly distributed in the ATC shopping center, and the number of qualified candidate cells does not change too much for a given query region in both \thor{} and \thorm.
However, the moving objects are distributed on the road network of Beijing and Las Vegas in \textsf{Geolife} and \textsf{Brinkhoff}, respectively.
Thus, the pruning ability of \thor{} is better than \thorm{} as \thor{} enables the \textsf{Metadata Synchronizer} module.

In Figure~\ref{fig:knneva}, we measure the \thor{}'s performance of $k$NN query processing. 
We use a large enough radius in \thorm{} to guarantee the correctness of $k$NN query processing as it cannot identify the minimum circle region via \textsf{Metadata Synchronizer} module.
As illustrated in Figure~\ref{fig:knneva}, \thor{} is always the winner in all datasets, which confirms the effectiveness of our proposed techniques (e.g., \textsf{Load Balancer}, \textsf{Indexer}, \textsf{Metadata Synchronizer}).
In addition, the $k$NN query performance of \thorm{} is  better than \bb{} on \textsf{Brinkhoff}, because \bb{} is very inefficient when it need scan the large amount of moving objects in \textsf{Brinkhoff}.
Last but not least, the throughput of $k$NN query in \thor{} and \thorm{} does not affect too much with the rising of $k$ in \textsf{Geolife} and \textsf{Brinkoff},
but it drops obviously in \textsf{Shopping} as the moving object distribution is more random in \textsf{Shopping} and the number of qualified candidates cells are increasing significantly when $k$ becomes larger in it.

\begin{figure*}
	\hspace*{-1cm}
    \centering
    \begin{tabular}{cccc}
    \includegraphics[width=0.5\columnwidth]{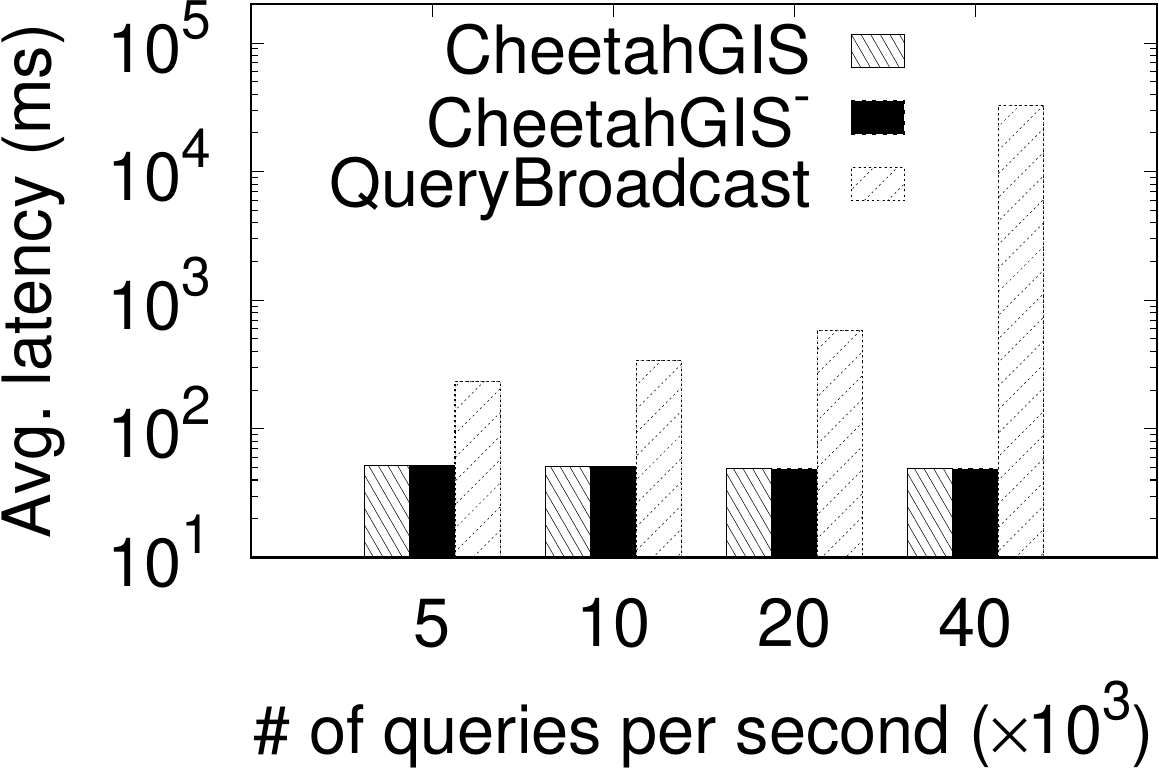}
    &
    \includegraphics[width=0.5\columnwidth]{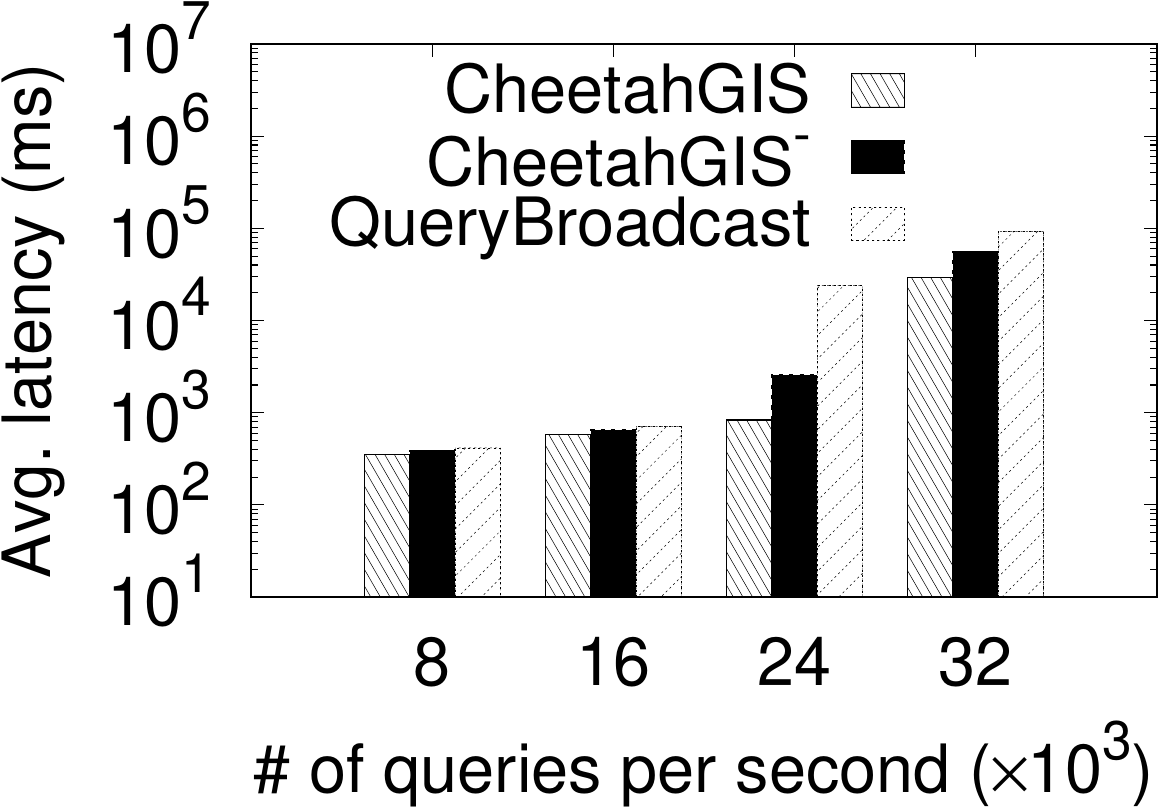}
    &
    \includegraphics[width=0.5\columnwidth]{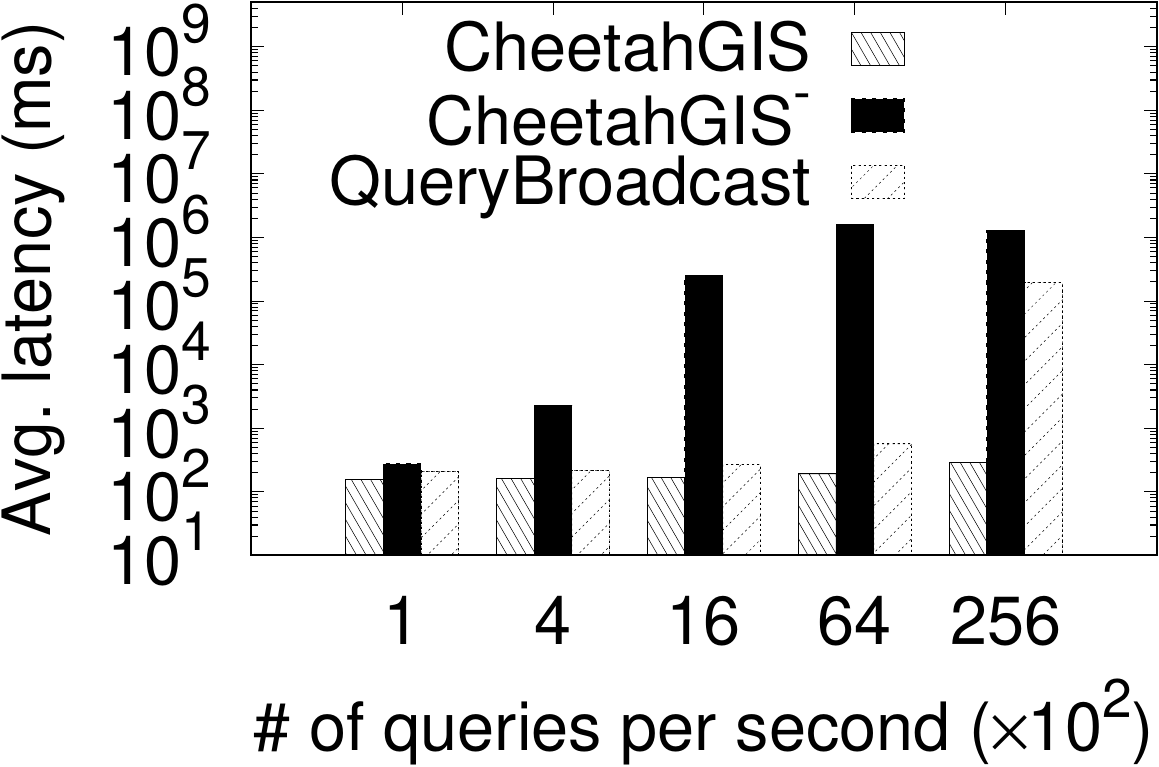}
    &
    \includegraphics[width=0.53\columnwidth]{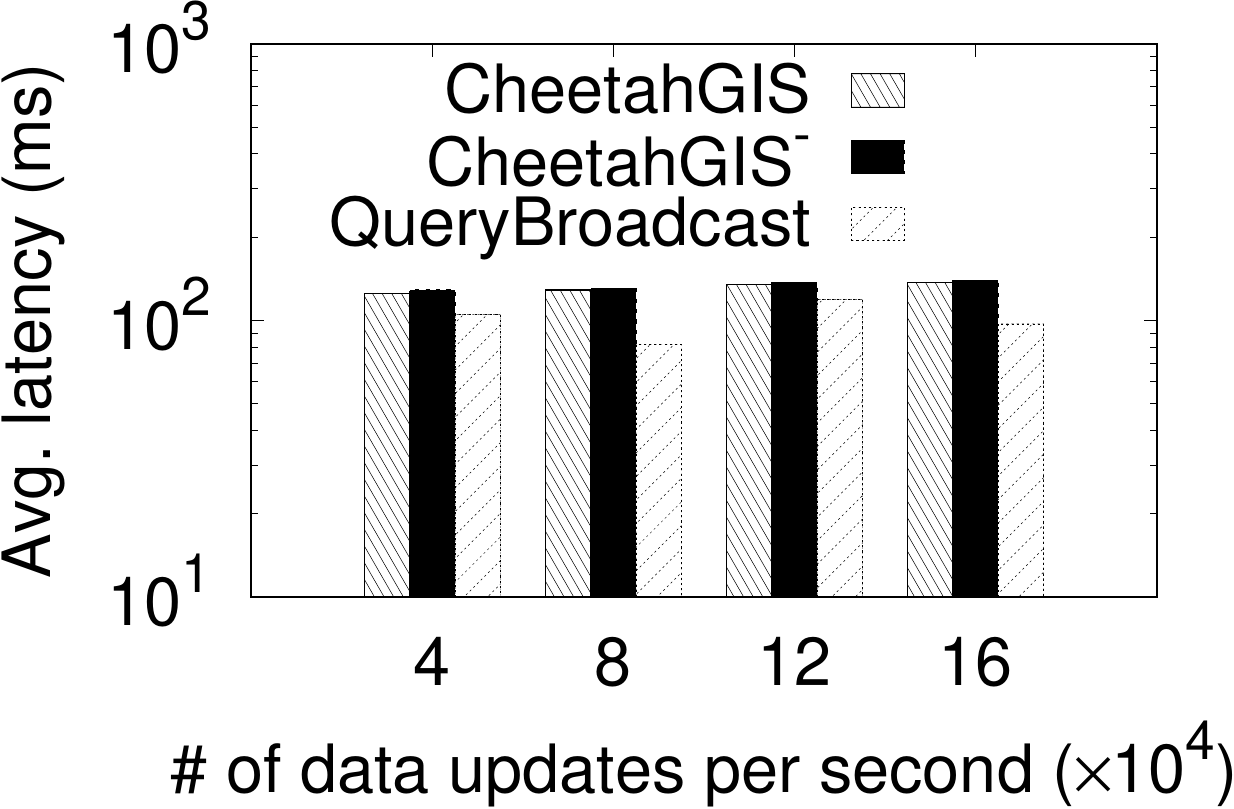}
    \\
    (a) Object query
    &
    (b) Range count query
    &
    (c) $k$NN query
    &
    (d) Data updates
    \end{tabular}
	\caption{Latency evaluation of \thor{} by varying throughput requirement, on \textsf{Shopping}}
	\label{fig:latency}
\end{figure*}

In Figure~\ref{fig:latency}, we report the average latency of the corresponding queries or operations by considering different numbers of queries/data updates per second issued within 5 minutes.
Specifically, we evaluate the average latency of object query, range count query, $k$NN query, and data object updates on \textsf{Shopping} dataset.
In Figure~\ref{fig:latency}(a), we report the average latency of object query, which contains randomly selected moving object ID in \textsf{Shopping}.
The latency of \bb{} is significantly larger than both \thor{} and \thorm, and it takes almost 10 seconds when the number of issued queries is 4000 per second.
Both \thor{} and \thorm{} perform excellently, i.e., the object query processing latency is only 50ms in all cases, as the object query can be answered directly by looking up the hash table in \textsf{Transformer} module.
In Figure~\ref{fig:latency}(b), we measure the average latency of range count query with query region ratio 0.16\% on \textsf{Shopping}.
First, when the number of range count queries is 8000 to 16000 per second, the latency of all three solutions (i.e., \thor, \thorm{} and \bb) are similar.
However, the latency of range count query processing in \thorm{} and \bb{} is unacceptable when the number of range count queries is 24000 per second,
e.g., it takes 23.78s to process a range count query in \bb.
It means that the processing capacity of \thorm{} and \bb{} is below 24000 range count query per second.
Last but not least, \thor{} performs the best in all cases.

We plot the latency of $k$NN query processing in Figure~\ref{fig:latency}(c) by setting $k=8$.
First, when the number of $k$NN queries is around 100 per second, the average latency of all three compared solutions (i.e., \thor, \thorm, and \bb) is almost the same.
However, the performance of \thorm{} degenerates seriously when the number of $k$NN queries is larger than 400 per second, it performs even worse than \bb.
The reason is \thorm{} estimates a large enough radius to guarantee the correctness of $k$NN query result, which results in poor pruning ability of \textsf{Indexer} and heavy workload of each
\textsf{Local Processor}.
In addition, the latency of \bb{} is 566ms when the number of $k$NN queries is 6400 per second.
However, the latency of \thor{} is 292ms even the number of $k$NN queries turns to 25600 per second.
In Figure~\ref{fig:latency}(d), we measure the average data update latency of \thor, \thorm{} and \bb.
As expected, the latency of \thor{} and \thorm{} to handle data updates is 20-40ms slower than \bb.
This is because \thor{} and \thorm{} have more modules and additional processing on data, e.g., data transformation on \textsf{Transformer} and grid-based routing on \textsf{Indexer},
and \bb{} is a very simple structure and only shuffles data to storage module.
Nevertheless, the data update latencies of all three solutions are quite good, i.e., it only takes almost 100ms to ingest all the data whose coming frequency is 160K data per second.

\subsection{Effect of Design Choices}\label{sec:effect}
In this section, we investigate the effect of different design choices (e.g., cell size in grid-based index, cluster size, execution mode, resource management, and load balancer) in \thor.

\begin{figure}
    \centering
    \begin{tabular}{cc}
    \includegraphics[width=0.48\columnwidth]{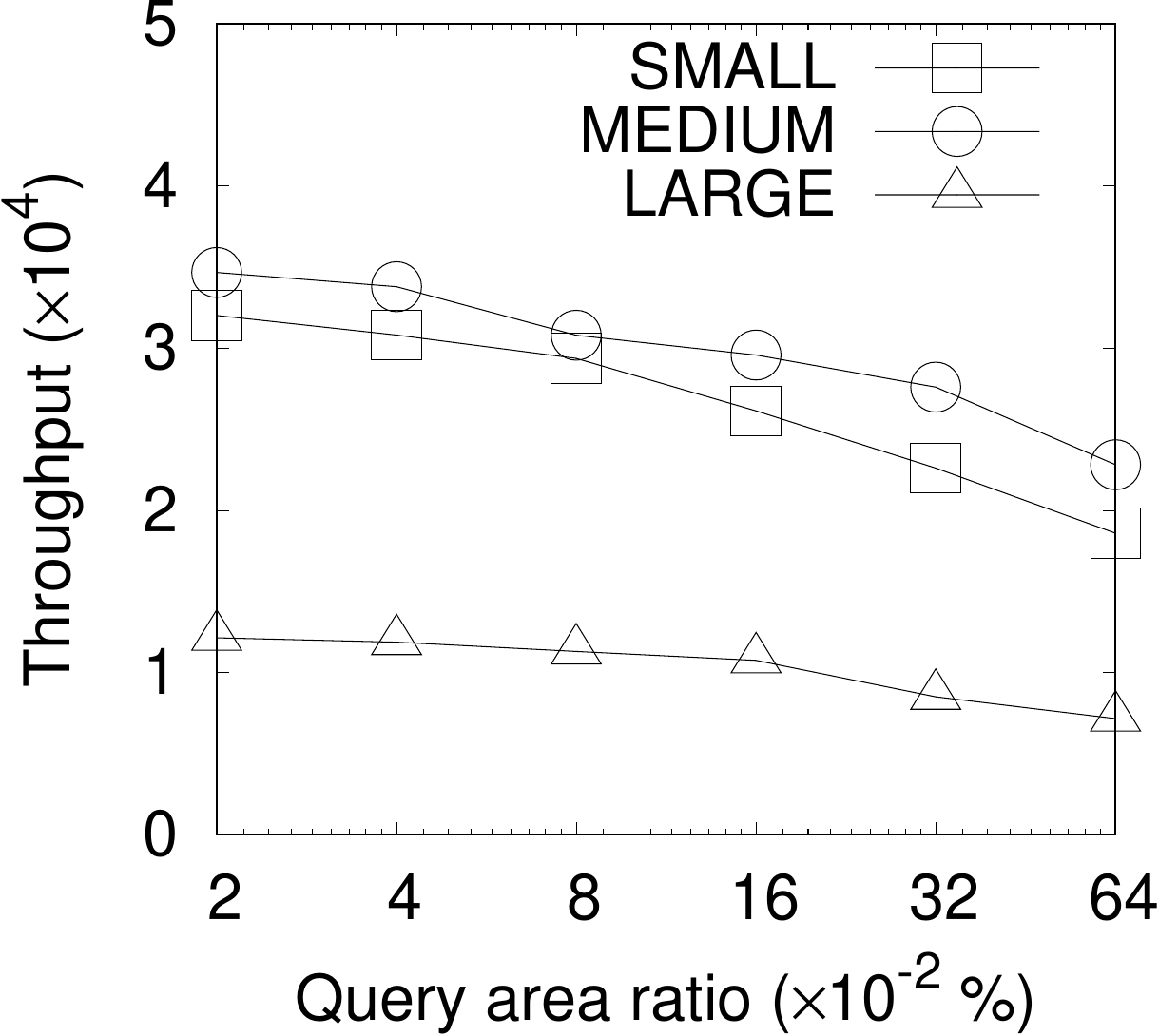}
    &
    \includegraphics[width=0.458\columnwidth]{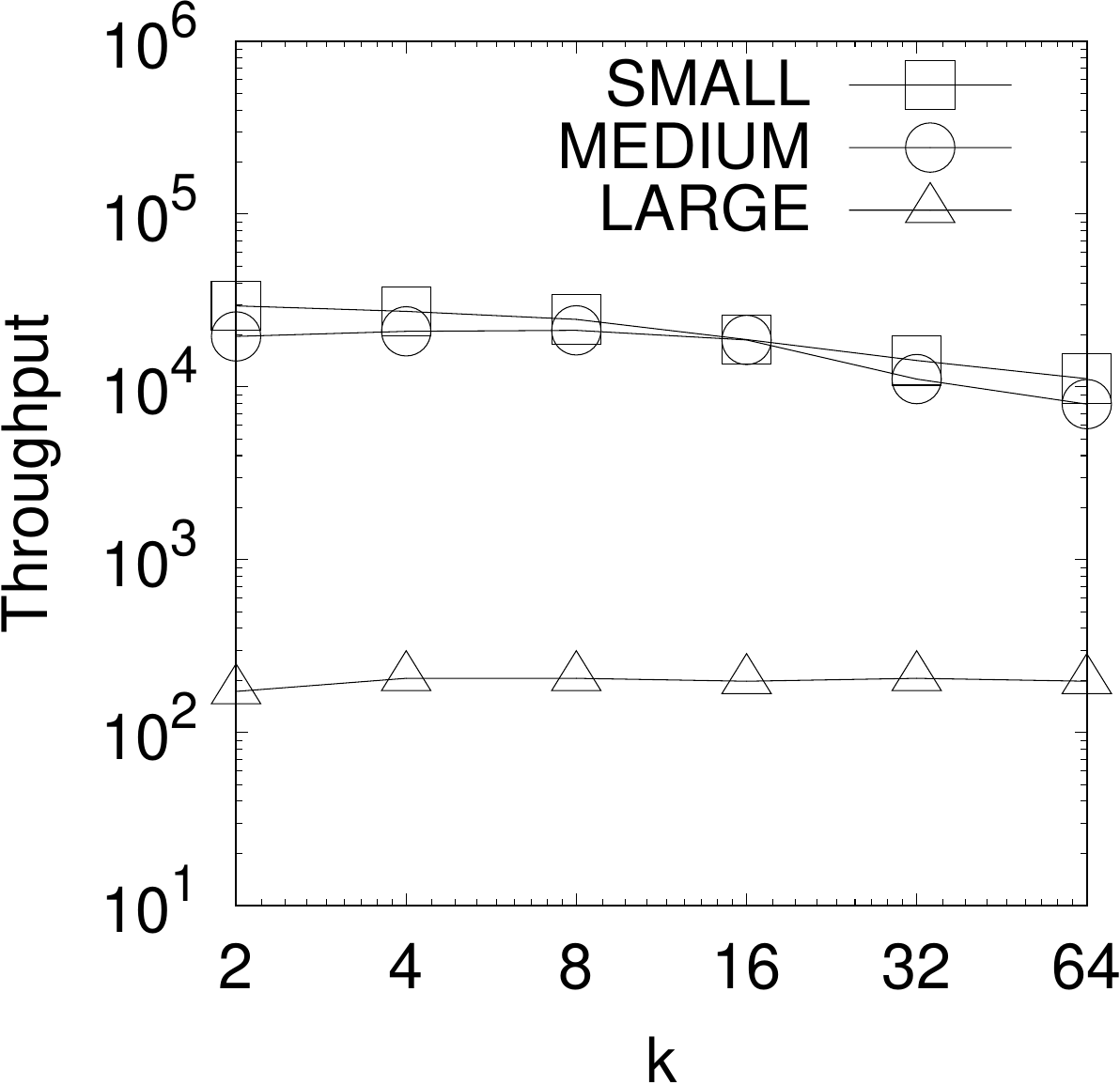}
    \\
    (a) Range count query
    &
    (b) $k$NN query
    \end{tabular}
	\caption{Throughput (queries per second) varying grid-cell size, on \textsf{Shopping}}
	\label{fig:cellsize}
\end{figure}

We first evaluate the throughput of range count query and $k$NN query with different cell sizes in the grid-based index in \thor (see Section~\ref{sec:indexer}).
In particular, we test three different cell sizes, i.e., \textsf{SMALL: $0.6m \times 0.4m$}, \textsf{DEFAULT: $1.0m \times 0.667m$}, and \textsf{LARGE: $2.4m \times 1.6m$}.
The corresponding experimental results are shown in Figures~\ref{fig:cellsize}(a) and (b), respectively.
For range count query, their throughputs fall with the rising of query region size in all three cell sizes, as shown in Figure~\ref{fig:cellsize}(a).
the largest cell size (i.e., \textsf{LARGE}) performs the worst in all cases as the pruning ability of the grid-based index with a large cell size is poor and many cells will be processed by \textsf{Local Processor} instances.
The throughput of the smallest cell size (i.e., \textsf{SMALL}) is slightly worse than the throughput of the default cell size
as the overhead of cell pruning becomes obvious when the cell size is too small, i.e., fine-grained pruning is expensive.

In Figure~\ref{fig:cellsize}(b), we show the throughput of $k$NN query with three different levels of cell size in the grid-based index.
The overall trend is the throughput falls when the value of $k$ increases.
In addition, similar to the performance of range count query, the largest cell size (i.e., \textsf{LARGE}) performs the worst in $k$NN query processing.
Interestingly, the smallest cell size (i.e., \textsf{SMALL}) has the best performance when $k$ ranges from 2 to 64 as the pruning ability of small cell size in the grid-based index is very excellent.
It reduces many computation costs as only fewer qualified candidate cells will be processed in \textsf{Local Processor} instance.

Combing the experimental results in Figures~\ref{fig:cellsize}(a) and (b), it is not trivial to set a proper cell size to support the efficient range count query and $k$NN query simultaneously in \thor.
We suggest the users use a medium cell size, e.g., as the default size in our experiments.

\begin{figure}
    \centering
    \begin{tabular}{cc}
    \includegraphics[width=0.48\columnwidth]{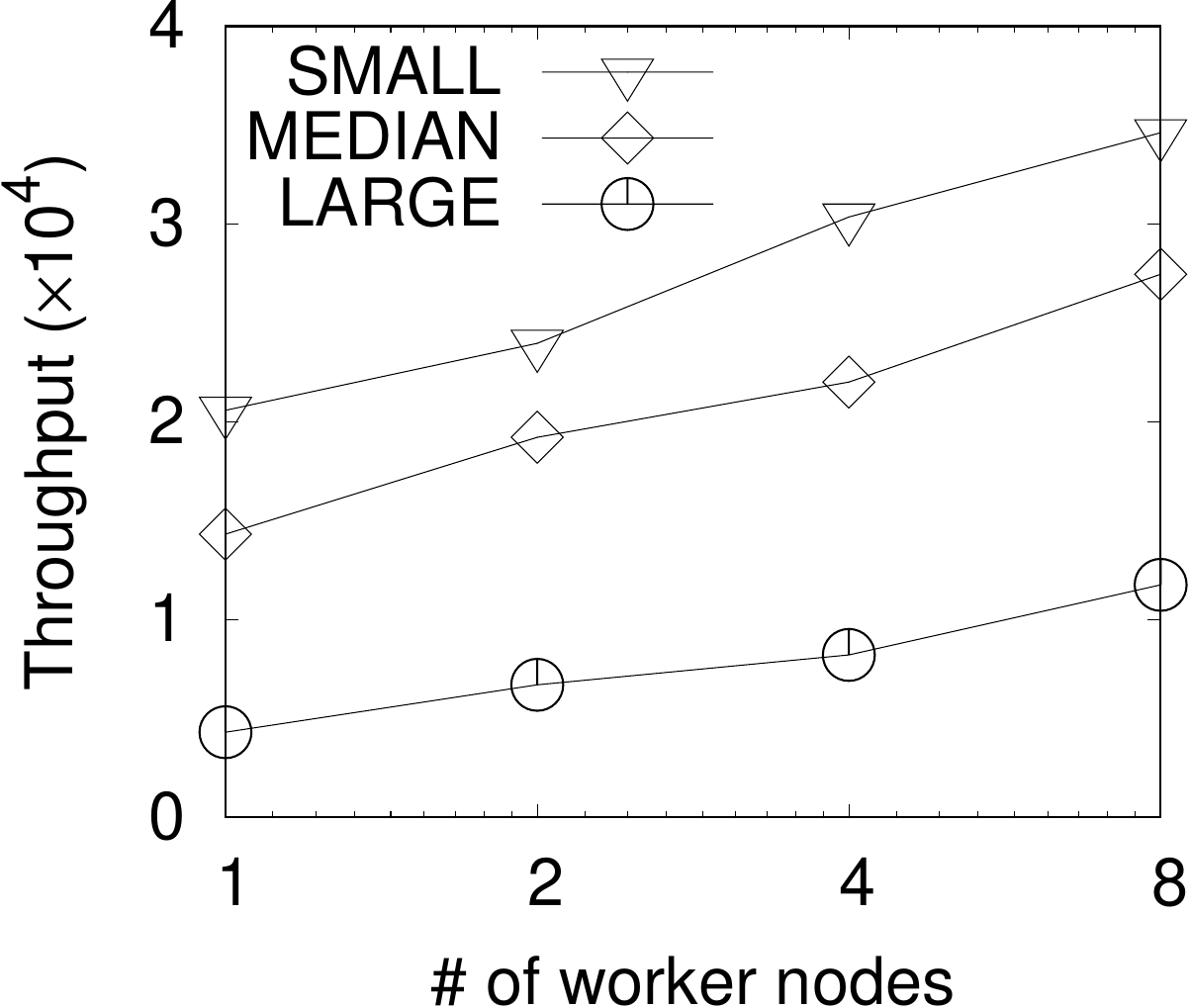}
    &
    \includegraphics[width=0.48\columnwidth]{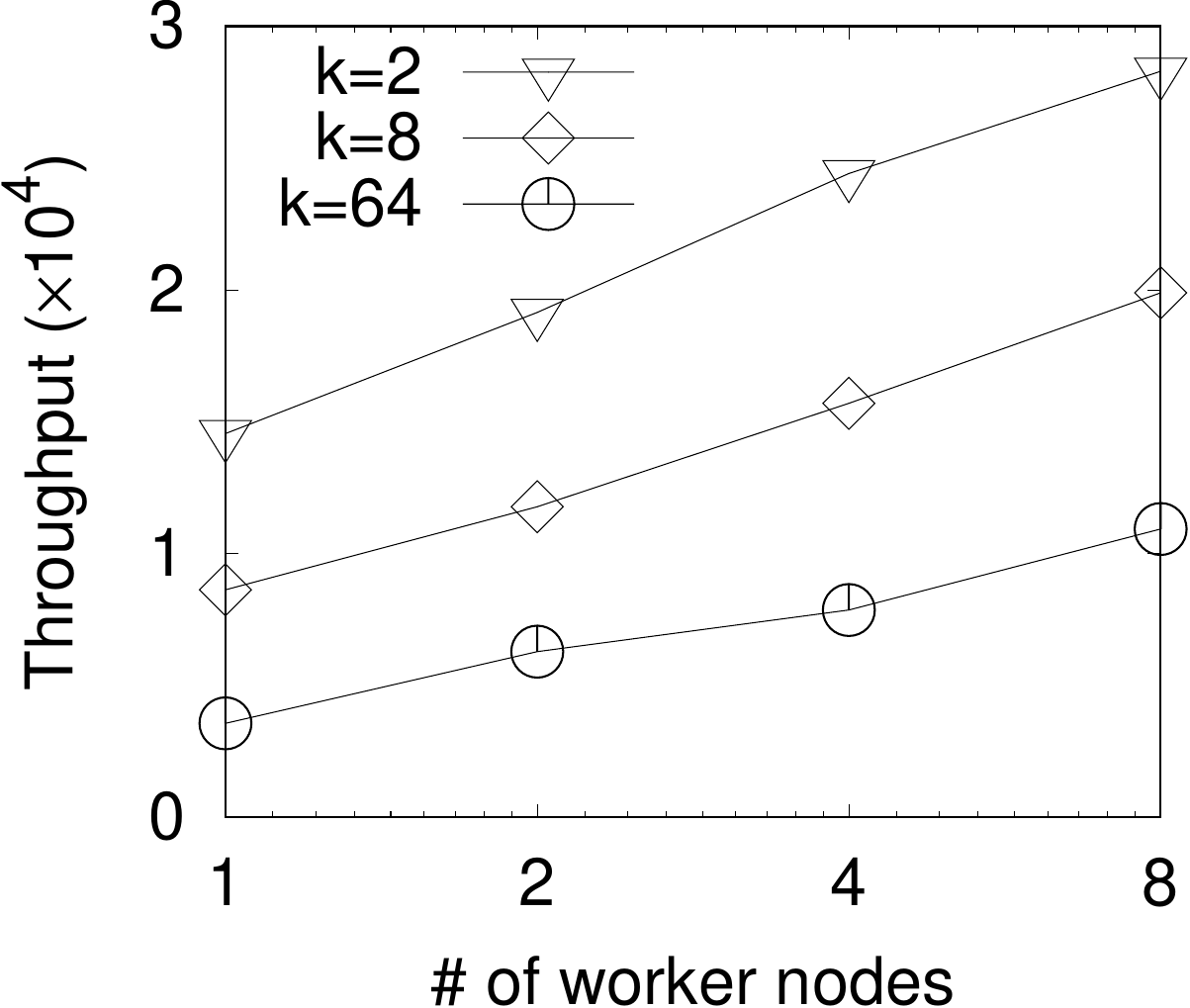}
    \\
    (a) Range count query
    &
    (b) $k$NN query
    \end{tabular}
	\caption{Throughput (queries per second) varying the number of workers, on \textsf{Shopping}}
	\label{fig:parallelism}
\end{figure}

We then investigate the effect of the number of workers in the cluster in Figure~\ref{fig:parallelism}.
In particular, the number of workers ranges from 2 to 8 to test the throughput of range count query and $k$NN query in \thor.
The number of task slots is 15 in each worker and the default data set is \textsf{Shopping}.
As illustrated in Figure~\ref{fig:parallelism}(a), the throughput of the range count query rises in all cases when the number of workers increases.
In addition, the throughput of range count query with large query region size (i.e., \textsf{LARGE: $64 \times 10^{-2}\%$}) is always lower than that with small query region size (i.e., \textsf{SMALL: $2 \times 10^{-2}\%$}).
The throughput of $k$NN query are plotted in Figure~\ref{fig:parallelism}(b).
Similarly, the throughput of $k$NN query in all cases is increasing with the rising in the number of workers.
The performance of $k$NN with large $k$ (i.e., $k=64$) is worse than that with small $k$ (i.e., k =2).

\begin{figure}
    \centering
    \begin{tabular}{cc}
    \includegraphics[width=0.48\columnwidth]{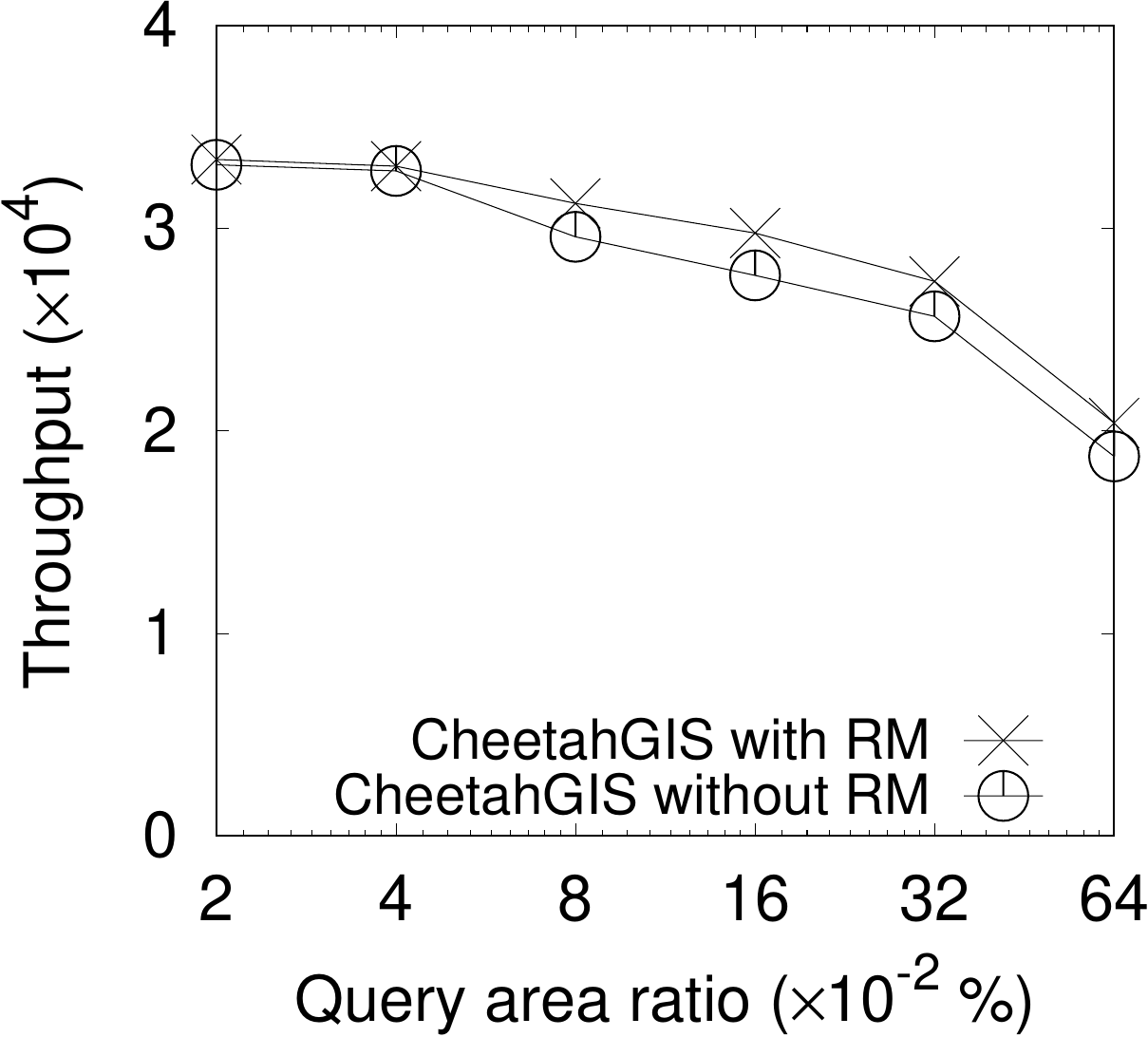}
    &
    \includegraphics[width=0.463\columnwidth]{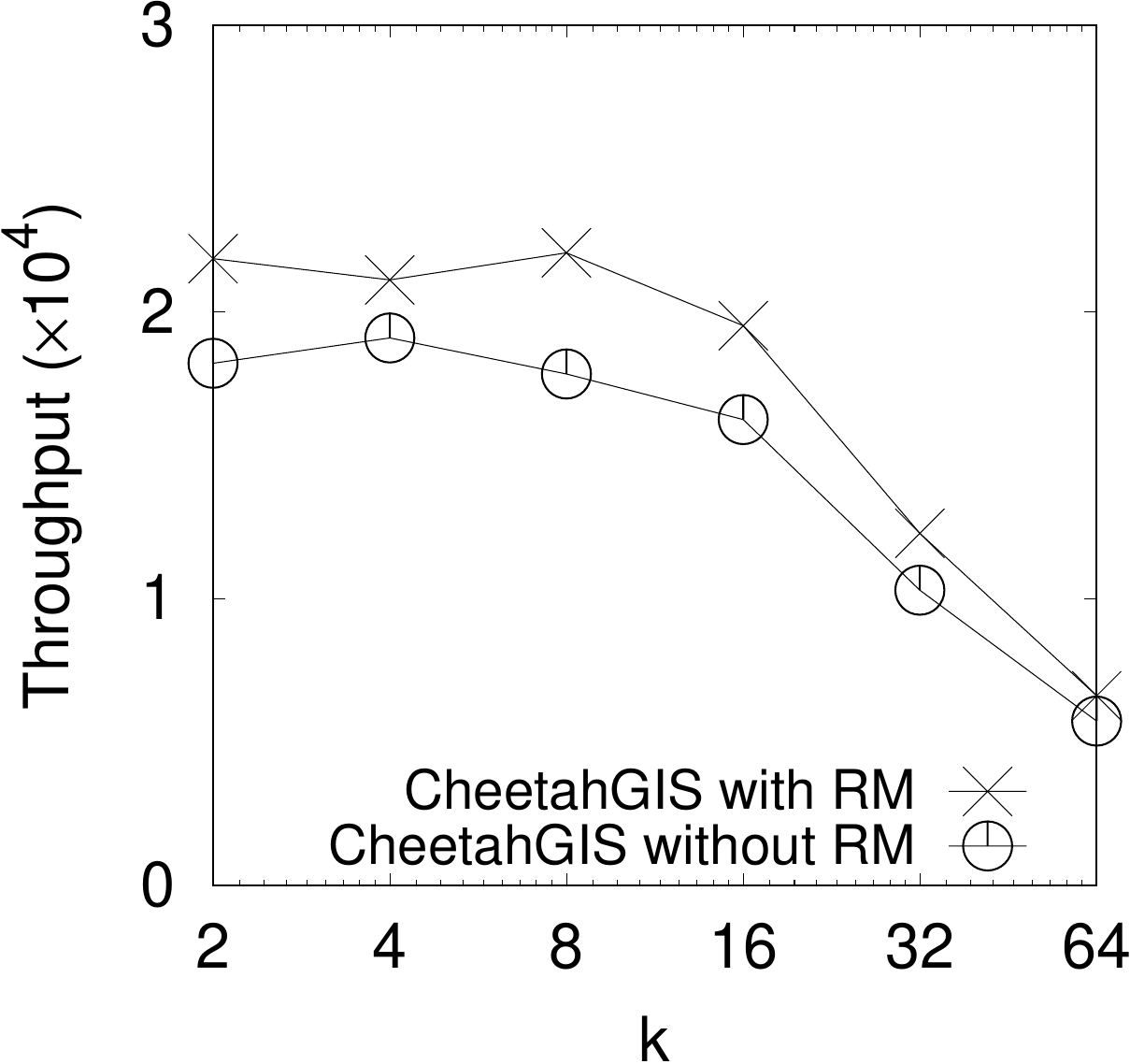}
    \\
    (a) range count query
    &
    (b) $k$NN query
    \end{tabular}
	\caption{Throughput (queries per second) varying resource management method, on \textsf{Shopping}}
	\label{fig:rm}
\end{figure}

We study the effect of fine granularity resource management in \thor{} (see Section~\ref{sec:rm}) with range count query and $k$NN query in Figures~\ref{fig:rm}(a) and (b), respectively.
Specifically, the \textsf{\thor{} with RM} and \textsf{\thor{} without RM} refers to the \thor{} with and without fine granularity resource management scheme.
The \textsf{\thor{} with RM} outperforms \textsf{\thor{} without RM} by 5\%-9\% and 10\%-23\% for range count query and $k$NN query, respectively.
The reason is obvious as our fine-granularity resource management avoids the resource contention in the original task slot allocation scheme in Flink, i.e., \textsf{\thor{} without RM}.

\begin{figure}
    \centering
    \begin{tabular}{cc}
    \includegraphics[width=0.48\columnwidth]{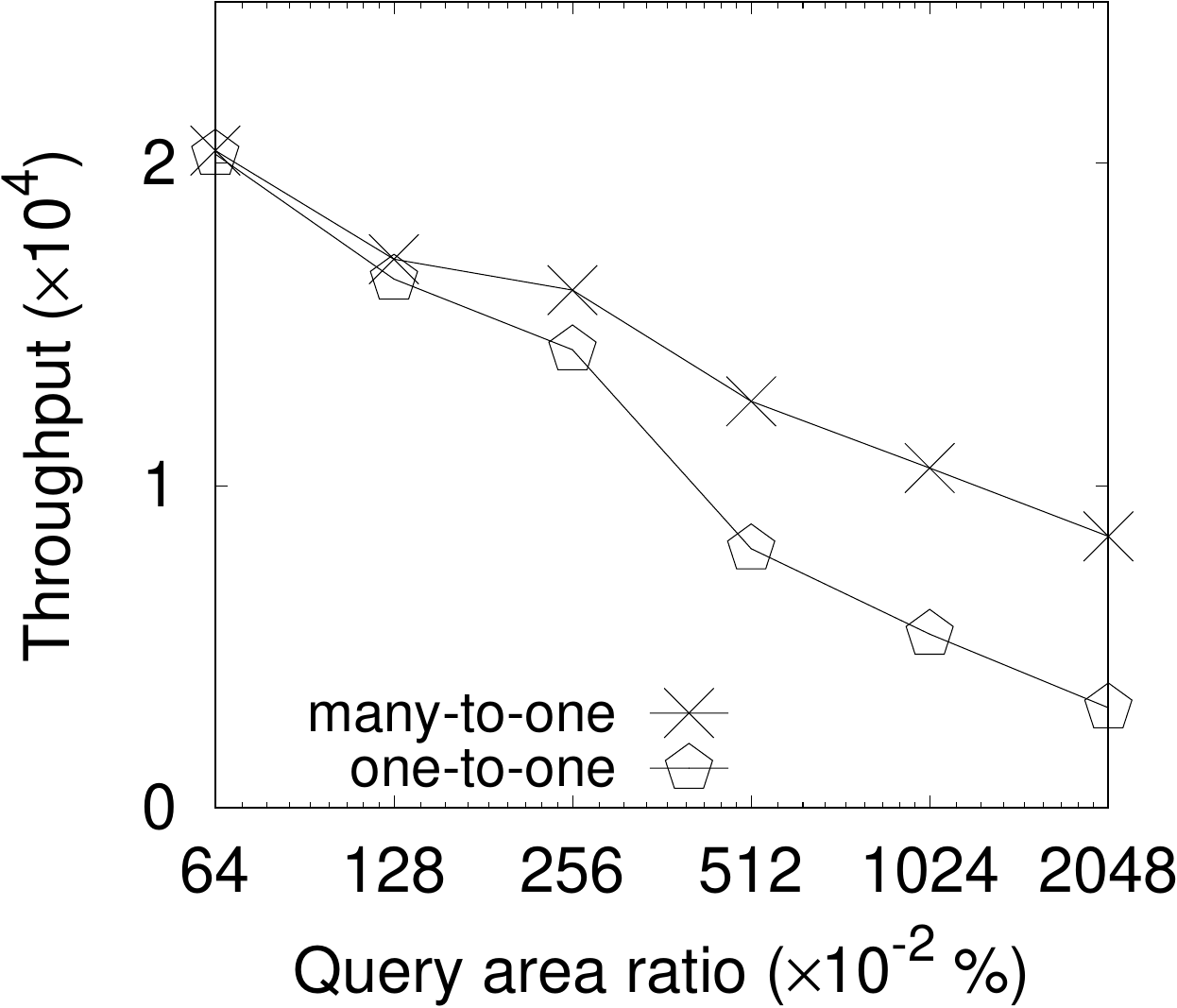}
    &
    \includegraphics[width=0.45\columnwidth]{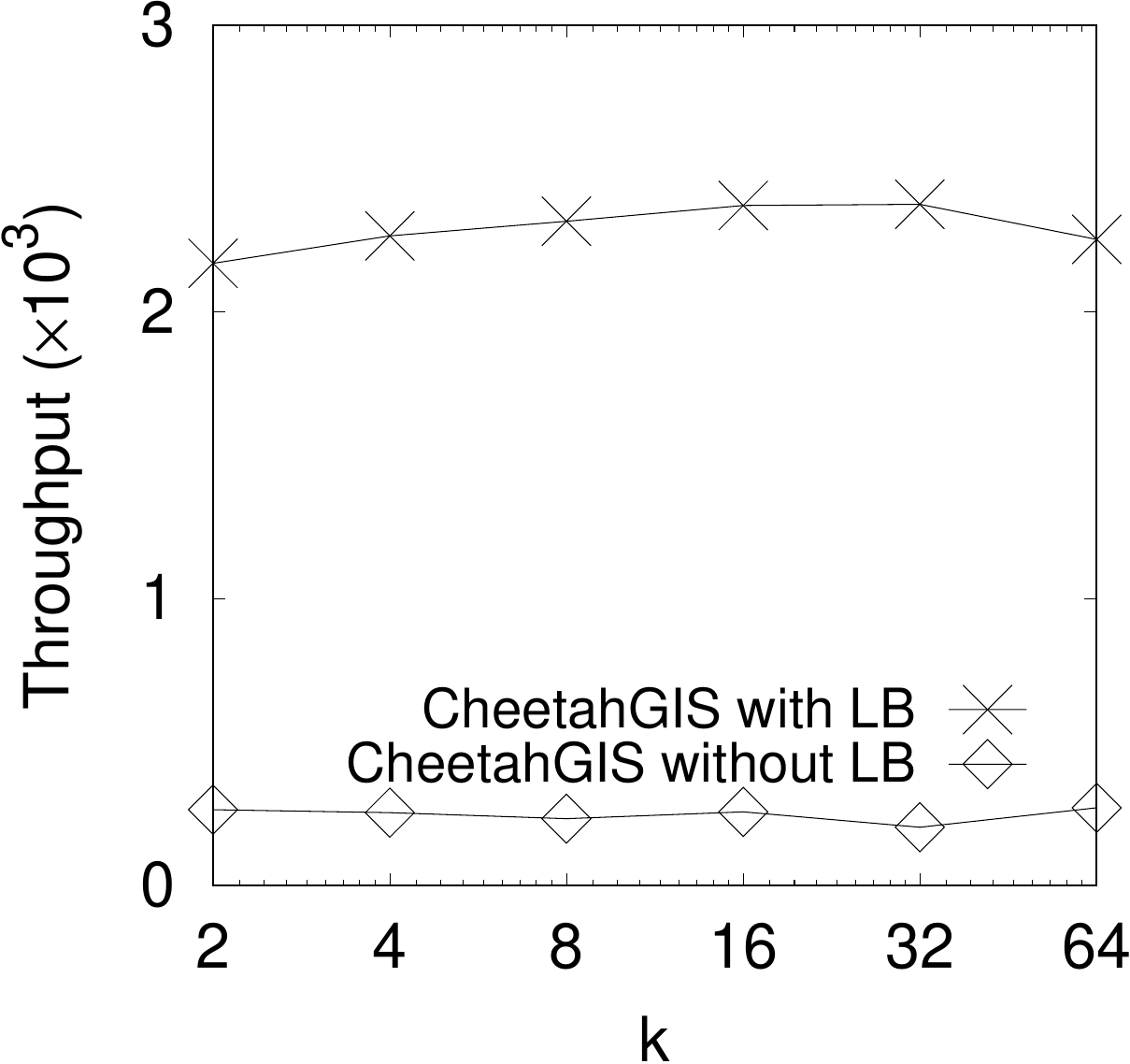}
    \\
    (a) range count query
    &
    (b) $k$NN query
    \end{tabular}
    \trim
	\caption{Throughput (queries per second) varying design of \thor{}}
	\label{fig:effect}
	\trim
\end{figure}

In Figure~\ref{fig:effect}(a), we measure the performance of range count query in \thor{} with \textsf{Shopping} dataset by considering two \textsf{Local Processor} execution modes: \textit{many-to-one} (i.e., the execution mode in \thor) and \textit{one-to-one} (i.e., the alternative mode we discussed in Section~\ref{sec:cellgroup}).
The performance of range count query degenerates in both execution modes when the query region size becomes larger.
However, \textit{many-to-one} execution mode is always better than that of \textit{one-to-one}, and the performance gap becomes obvious when the query region size is large.
The reason is that \textit{many-to-one} execution mode reduces the communication cost by processing different cells in a cell group, i.e., \textit{many-to-one} execution mode.

We last evaluate the effect of \textsf{Load Balancer} techniques (see Section~\ref{sec:loadbalance}) in Figure~\ref{fig:effect}(b).
In particular, we compared the performance of \thor{} with or without \textsf{Load Balancer} module when processing $k$NN query, as \textsf{\thor{} with LB} and \textsf{\thor{} without LB} in Figure~\ref{fig:effect}(b).
To exemplify, we generate a synthetic dataset with \textsf{Shopping} to enlarge the data bias.
Specifically, 95\% of moving objects in \textsf{Shopping} are concentrated on these cells which are managed by  3 out of 120 \textsf{Local Processors}.
As shown in Figure~\ref{fig:effect}(b), \textsf{\thor{} with LB} improves \textsf{\thor{} without LB} 8-10 times in terms of the throughput.
It confirms the superiority of our proposed load balancing techniques in Section~\ref{sec:loadbalance}.

\section{Conclusion}\label{sec:con}
In this work, we proposed \thor{} to process streaming spatial queries (e.g., range count query, $k$NN query, object query) efficiently.
It is built upon \statefun{} and streaming system Flink.
It consists of 6 modules, i.e., \textsf{Transformer}, \textsf{Indexer}, \textsf{Local Processor},
\textsf{Aggregator}, \textsf{Load Balancer} and \textsf{Metadata Synchronizer}.
The advantages of \thor{} can be summarized as follows:
(i) the excellent query processing performance on a wide range of spatial streaming queries;
(ii) its extensible system architecture and generic query processing procedure;
and (iii) the superiority of the designed techniques and optimizations (e.g., \textit{many-to-one} execution mode, fine-granularity resource management).
We demonstrate the effectiveness of spatial streaming query processing in \thor{} by extensive experiments on various datasets.
In the future, \thor{} will be enhanced from two aspects.
The first one is devising advanced techniques (efficient imbalance remedy algorithm) to improve the system performance,
and the second one is supporting streaming spatial analytical query processing (e.g., outlier region detection, spatial data clustering) effectively.

\bibliographystyle{IEEEtran}
\bibliography{ref}

\end{document}